\documentclass{article}

\usepackage{hyperref}
\usepackage{graphicx}
\usepackage{float}
\usepackage{subfigure}
\usepackage{amsmath, amsthm}%
\newtheorem{theorem}{Theorem}
\newtheorem{lemma}{Lemma}

\newtheorem{corollary}{Corollary}

\usepackage{fullpage}%
\usepackage{amsfonts}%

\usepackage[export]{adjustbox}

\usepackage[ruled,vlined,linesnumbered, algo2e]{algorithm2e}
\usepackage{framed}

\usepackage{tikz}

\providecommand{\norm}[1]{\left\lVert#1\right\rVert}

\newcommand{\br}[1]{\left\{#1\right\}}                            

\newcommand{\REAL}{\ensuremath{\mathbb{R}}}
\newcommand{\cost}{\ensuremath{\mathrm{cost}}}
\newcommand{\eps}{\varepsilon}
\newcommand{\dist}{\mathrm{dist}}
\newcommand{\OPT}{\ensuremath{\mathrm{opt}}}

\newcommand{\q}{\{q\}}

\newcommand{\algkmean}{\textsc{$k$-Mean-Coreset}}

\newcommand{\opt}{\OPT}
\newcommand{\bigdata}{\text{Big Data}}

\title{$k$-Means for Streaming and Distributed Big Sparse Data}
\date{}
\author{
Artem Barger \and Dan Feldman \\
\small \hspace{-4.1cm}artem@barger.net  \hspace{1.2cm} dannyf.post@gmail.com\\[0.1cm]
\hspace*{-5cm} University of Haifa}

\makeatletter
\newcommand{\removelatexerror}{\let\@latex@error\@gobble}
\makeatother

\begin{document}

\maketitle

\begin{abstract}

We provide the first streaming algorithm for computing a provable approximation to the $k$-means of sparse \bigdata. Here, sparse \bigdata is a set of $n$ vectors in $\REAL^d$, where each vector has $O(1)$ non-zeroes entries, and $d\geq n$. E.g., adjacency matrix of a graph, web-links, social network, document-terms, or image-features matrices.

Our streaming algorithm stores at most $\log n\cdot k^{O(1)}$ input points in memory. If the stream is distributed among $M$ machines, the running time reduces by a factor of $M$, while communicating a total of $M\cdot k^{O(1)}$ (sparse) input points between the machines.

Our contribution is a deterministic algorithm for computing a \emph{sparse $(k,\eps)$-coreset}, which is a weighted \emph{subset} of $k^{O(1)}$ input points that approximates the sum of squared distances from the $n$ input points to \emph{every} $k$ centers, up to $(1\pm\eps)$ factor, for any given constant $\eps>0$. This is the first such coreset of size independent of both $d$ and $n$.

Existing algorithms use coresets of size at least polynomial in $d$, or project the input points on a subspace which diminishes their sparsity, thus require memory and communication $\Omega(d)=\Omega(n)$ even for $k=2$.

Experimental results on synthetic and real public dataset shows that our algorithm boost the performance of such given heuristics even in the off-line setting. Open access to our implementation is also provided~\cite{codeimpl}. 
\end{abstract}

\section{Background}
\paragraph{Clustering. }
For a given similarity measure, clustering aims at partitioning a given set of points into coherent subset. There is a lot of different clustering techniques, but most prominent and popular is Lloyd's algorithm or the $k$-means algorithm~\cite{ostrovsky2006effectiveness,arthur2007k}. The input to the classical Euclidean $k$-means problem is a set of $n$ points in $\REAL^d$, and the goal is to compute a set of $k$-centers (also points in $\REAL^d$) that minimizes the sum of squared distances over each input point to its nearest center.

More generally, there are some constraints on the location of the centers. For example, in the \emph{discrete $k$-mean} the set of centers must be subset of the input points. This version is preferable for example, when each input point is sparse (i.e., have small number of non-zeroes coordinates), and we wish that the centers will also be sparse. In other applications, such as in computer vision, the input is a set of images and we wish to compute representative $k$ images (centers). Since it is not clear how to interpret the sum of a vector that represents an image of a cat with a vector that represents an image of a dog, it is natural to ask that the set of $k$ centers will be a subset of the input images. In facility location problems~\cite{hamacher2002facility,feldman2006coresets}, the input points represent the location of clients, and the centers are called facilities. In this case we might have constraints that some of the centers will be closer to some clients.

However, most of the existing clustering algorithms, especially the ones that support streaming or large distributed data sets, assume that the dimension is significantly smaller than the number of input points. Otherwise, techniques such as PCA/SVD~\cite{lee2007nonlinear,feldman2013turning} or Johnson-Lindenstraus~\cite{johnson1984extensions} are used for reducing the dimensionality of the input. However, such projections turn sparse data into dense data, and the $k$-means of the projected points are no longer subset of the input or sparse~\cite{feldman2013turning}. In fact, few projected (dense) points in a high dimensional space might take more memory than the complete original dataset, e.g. in Wikipedia document-term $n\times d$ matrix, where $d\gg n$.

\paragraph{Sparse \bigdata. }
We consider sparse \bigdata as a set of $n$ vectors in $\REAL^d$, where each vector has $O(1)$ non-zeroes entries, and both the number of vectors $n$ and their dimensionality $d$ is asymptotically large. Hence, we cannot store all the vectors, or general linear combinations of vectors, in memory. Instead, we allow to scan these vectors in a streaming fashion, while using memory that is only logarithmic in $n$ or $d$.

In particular, unlike previous related work, this paper handle the case $d\geq n$ (e.g. $d=n^n$), where most of those papers assume $d\ll n$. Our algorithm also supports input such as in NoSQL or Matlab sparse matrices, where the row $i$ is given as a stream of pairs $(j_1,val_1), (j_2,val_2),\cdots$ where $(j_m,val_m)$ means that the $j_m$th entry of the row is $val_m$, and the rest are zeroes by default.

Sometimes a cloud or a large set of machines are available for handling \bigdata. In this case, the data is distributed among them and we wish not only to process the data in parallel, but also to minimize communication between the machines. Finally, GPUs (Graphical Processing Units) are used to boost the performance of the algorithm, but only if the algorithm uses a limited set of simple operations that are supported by the GPU.

All these models are supported by our algorithm, via the framework of sparse coresets that is explained below.

\paragraph{Coresets} Given a set of $n$ points in $\REAL^d$, and an error parameter $\eps>0$, a coreset in this paper is a small set of weighted points in $\REAL^d$, such that the sum of squared distances from the original set of points to any set of $k$ centers in $\REAL^d$ can be approximated by the sum of weighted squared distances from the points in the coreset.
The output of running an existing clustering algorithm on the coreset would then yield approximation to the output of running the same algorithm on the original data, by the definition of the coreset.

Coresets were first suggested in~\cite{AgaHarVar04} as a way to improve the theoretical running time of existing algorithms.
Moreover, a coreset is a natural tool for handling \bigdata using all the computation models that are mentioned in the previous section. This is mainly due to the merge-and-reduce tree approach that was suggested in~\cite{sariela, bentley1980decomposable} and is formalized in~\cite{FSS13}: coresets can be computed independently for subsets of input points, e.g. on different computers, and then be merged and re-compressed again. Such a binary compression tree can also be computed using one pass over a possibly unbounded streaming set of points, where in every moment only $O(\log n)$ coresets exist in memory for the $n$ points streamed so far. Here the coreset is computed only on small chunks of points, so a possibly inefficient coreset construction still yields efficient coreset constructions for large sets; see Fig.~\ref{fig:tree}. Note that the coreset guarantees are preserved while using this technique, while no assumptions are made on the order of the streaming input points.

In practice, this technique can be implemented easily using the map-reduce approach of modern software for \bigdata such as Hadoop~\cite{hadoop2009hadoop}.

The fact that the coreset approximates every set of $k$ centers, allows us to use the above merge-and-reduce technique where the optimal solution keeps changing, and also to solve the $k$-means problem under different constraints or a small set of candidate centers.

\section{Related Work}\label{related}
We summarize existing coresets constructions for $k$-mean queries, as will be formally defined in Section~\ref{sec:coreset}.

\paragraph{Importance Sampling.} Following a decade of research, coreset of size polynomial in $d$ were suggested in~\cite{KeChen06}. An improved version of size $O(dk^2/\eps^2)$ was suggested in~\cite{LS10} which is a special case of the algorithms in~\cite{feldman2007ptas}. The construction is based on computing an approximation to the $k$-mean of the input points (with no constraints on the centers) and then sample points proportionally to their distance to these centers. Each chosen point is then assigned a weight that is inverse proportional to this distance.  The probability of failure in this algorithms reduces exponentially with the input size. Coresets of size $O(dk/\eps^2)$, i.e., linear in $k$, were suggested in~\cite{FL11}, however the weight of a point may be negative or a function of the given query.
For the special case $k=1$, such coresets of size $O(1/\eps^2)$ were suggested in~\cite{InaKatIma94} using uniform sampling.

\textbf{Projection based coresets.} Coresets of size $O(k/\eps)$ that are based on projections on low-dimensional subspaces that diminishes the sparsity of the input data were recently suggested in~\cite{mcohen} by improving the analysis in~\cite{FSS13}. Other type of weak coresets approximates only the optimal solution, and not every $k$ centers. Such randomized coresets of size independent of $d$ and only polynomial in $k$ were suggested in~\cite{feldman2007ptas} and simplified in~\cite{FL11}.

\textbf{Deterministic Constructions.} The first coresets for $k$-means~\cite{sariela, sarielb} were based on partitioning the data into cells, and take a representative point from each cell to the coreset, as is done in hashing or Hough transform~\cite{ballard1981generalizing}. However, these coresets are of size at least $k/\eps^{O(d)}$, i.e., exponential in $d$. Deterministic coreset constructions of size $k^{O(k/\eps)}$, i.e.,  independent of $d$ but exponential in $k$, were suggested in~\cite{FSS13} by recursively computing $k$-means clustering of the input points.  Uniform sampling is then used for replacing each mean of a cluster by a subset of its points. By Markov's inequality, the probability of failure for these algorithms reduces only linearly with the input size. Therefore they cannot be used in the streaming or distributed setting, where we need to compute union of $O(n)$ core-sets.
Our technique improves this result by suggesting a fully deterministic coreset construction of size $k^{O(1/\eps^2)}$, i.e., polynomial in $k$.


\section{Our contribution\label{our}}
Our main technical result is a deterministic algorithm that computes a $(k,\eps)$-coreset of size $k^{O(1)}$ for a set of points $P$ and every constant error parameter $\eps>0$; see Theorem~\ref{thm1} and Corollary~\ref{cor} for details and exact bounds. This is the first coreset which uses number of memory words that is independent of both $n$ and $d$, and only polynomial in $k$ where each point of $P$ has $O(1)$ non-zeroes entries.
Using this coreset with the merge-and-reduce technique above, we achieve the following deterministic algorithms:

\begin{enumerate}
    \item An algorithm that computes a $(1+\eps)$ approximation for the $k$-means of a set $P$ that is distributed (partitioned) among $M$ machines, where each machine needs to send only $k^{O(1)}$ input points to the main server at the end of its computation.
    
\item A streaming algorithm that, after one pass over the data and using $k^{O(1)}\log n$ space returns an $O(\log n)$-approximation to the $k$-means of $P$.
\item Description of how to use our algorithm to boost both the running time and quality of any existing $k$-means heuristic using only the heuristic itself, even in the classic off-line setting.
\item Experimental results on real world data-sets and open-code that demonstrate how we applied our coreset to boost the performance of the popular Llooyd's $k$-means heuristic~\cite{ostrovsky2006effectiveness, arthur2007k}.
\end{enumerate}

\paragraph{Running time. } The overall running time of our algorithm is $n\log (n)\cdot k^{O(k)}$ for the set $P$ of $n$ points seen so far in a possibly unbounded stream, where each point has $O(1)$ non-zeroes entries. Computing a $(1+\eps)$-approximation to the $k$-means on the coreset takes time $|S|^{O(k)}$ where $|S|$ is the coreset size.
Running time that is exponential in $k$ is unavoidable for any $(1+\eps)$-approximation algorithm that solves $k$-means, even in the planar case $(d=2)$~\cite{mahajan2009planar}. Our main contributions thus a coreset construction that uses memory that is independent of $d$ and running time that is near-linear in $n$. This is a new result even for the case $k=2$.  To handle large values of $k$ in our experiments, our algorithm calls popular heuristics instead of computing the optimal $k$-means during the coreset construction and on the resulting coreset itself.

\section{Notation and Main Result\label{sec:coreset}}
The input to our problem is a set $P'$ of $n$ points in $\REAL^d$, where each point $p\in P'$ includes a multiplicative weight $u(p)>0$. In addition, there is an additive weight $\rho>0$ for the set. Formally, a \emph{weighted set} in $\REAL^d$ is a tuple $P=(P', u, \rho)$, where $P' \subseteq \REAL^d$, $u : P' \rightarrow [0,\infty)$. In particular, an \emph{unweighted} set has a unit weight $u(p)=1$ for each point, and a zero additive weight.

\paragraph{$k$-Mean clustering.} For a given set $Q=\br{q_1,\cdots,q_k}$ of $k\geq1$ centers (points) in $\REAL^d$, the Euclidean distance from a point $p\in \REAL^d$ to its closest center in $Q$ is denoted by
$\dist(p,Q)=\min_{q\in Q}\norm{p-q}_2$. The sum of these weighted squared distances over the points of $P$ is denoted by
        \[
            \cost(P ,Q) := \sum_{p \in P'}u(p)\cdot\dist^2(p,Q) + \rho,
        \]
If $P$ is an unweighted set, this cost is just the sum of squared distances over each point in $P'$ to its closest center in $Q$.

Let $P'_{i}$ denote the subset of points in $P'$ whose closest center in $Q$ is $q_i$, for every $i\in[m]=\br{1,\cdots,m}$. Ties are broken arbitrarily. This yields a partition $\br{P'_1,\cdots,P'_k}$ of $P'$ by $Q$.
More generally, the \emph{partition of $P$ by $Q$} is the set $\br{P_1,\cdots,P_k}$ where $P_i=(P'_i,u_i,\rho/k)$, and $u_i(p)=u(p)$ for every $p\in P'_i$ and every $i\in[k]$.

A set $Q_k$ that minimizes this weighted sum $\cost(P,Q)$ over every set $Q$ of $k$ centers in $\REAL^d$ is called the \emph{$k$-mean} of $P$. The $1$-means $\mu(P)$ of $P$ is called the centroid, or the center of mass, since
\[
\mu(P)=\frac{1}{\sum_{p'\in P'} u(p')}\sum_{p\in P'}u(p)\cdot p.
\]
We denote the cost of the $k$-mean of $P$ by $\OPT(P,k):=\cost(P,Q_k)$.

\paragraph{Coreset.}
Computing the $k$-mean of a weighted set $P$ is the main motivation of this paper.
To this end, we wish to compute another weighed set $S=(S', w, \phi)$ where $S'$ is small set (core-set) $\REAL^d$ that can be used to approximate $\cost(P,Q)$ for any set $Q$ of $k$ points. In particular, a set $Q$ that minimizes the weighted cost to $S$ must also approximately minimize the cost to $P$, over such sets $Q$ in $\REAL^d$.

Formally,  let $\eps>0$ be an error parameter. The weighted set $S=(S',w, \phi)$ is a $(k,\eps)$-{coreset} for $P$, if for every set $Q \subset \REAL^d$ of $|Q|=k$ centers we have
\[
(1-\eps)\cost(P,Q) \leq \cost(S,Q) \leq (1+\eps)\cost(P,Q),
\]
That is,
\[
    \begin{split}
    (1 - \eps)\sum_{p \in P} u(p)\dist^2(p, Q) +\rho \leq \sum_{p \in S}w(p)\cdot \dist^2(p, Q) + \phi
    \leq (1 + \eps) \sum_{p \in P}u(p)\cdot\dist^2(p, Q) + \rho.
    \end{split}
\]

To handle streaming data we will need to compute ``coresets for union of coresets", which is the reason that we assume that both the input $P$ and its coreset $S$ are weighted sets.

\paragraph{Sparse coresets.} Unlike previous work, we add the constraints that if each point in $P$ is sparse, i.e., has few non-zeroes coordinates, then the set $S$ will also be sparse. Formally, the \emph{maximum sparsity} $s(P):=\max_{p\in P}\norm{p}_0$ of $P$ is the maximum number of non zeroes entries $\norm{p}_0=|\br{i\mid p_i\neq 0, i\in[d]}|$ over every point $p$ in $P$.

In particular, if each point in $S$ is a linear combination of at most $\alpha$ points in $P$, then $s(S)\leq \alpha \cdot s(P)$. In addition, we would like that the set $S$ will be of size independent of both $n=|P|$ and $d$.

We can now state the main result of this paper.
\begin{theorem}[Small sparse coreset]\label{thm1}
For every weighted set $P=(P',u,\rho)$ in $\REAL^d$, $\eps>0$ and an integer $k\geq1$, there is a $(k,\eps)$-coreset $S=(S',w,\phi)$ of size $|S|=k^{O(1/\eps^2)}$ where each point in $S'$ is a linear combination of $O(1/\eps^2)$ points from $P'$. In particular, the maximum sparsity of $S'$ is $s(P)/\eps^2$.
\end{theorem}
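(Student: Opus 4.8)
\emph{Reduction to group centroids via the parallel-axis theorem.} The plan is to build $S$ as a union of weighted group centroids. For any weighted subset $C\subseteq P'$ with total weight $W_C=\sum_{p\in C}u(p)$, centroid $\mu_C$, and variance $\var_C=\sum_{p\in C}u(p)\norm{p-\mu_C}_2^2$, the parallel-axis identity gives, for every query point $q$,
\[
\sum_{p\in C}u(p)\norm{p-q}_2^2=W_C\norm{\mu_C-q}_2^2+\var_C .
\]
Hence, if every point of a group $C$ is served by the \emph{same} center of a query $Q$, then replacing $C$ by the single weighted point $(\mu_C,W_C)$ together with additive weight $\var_C$ reproduces $\cost(C,Q)$ \emph{exactly}; this is precisely the guarantee of \onemean. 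The whole construction therefore reduces to two tasks: (i) partition $P'$ into groups whose centroids, summed with the total within-group variance folded into the additive weight $\phi$, reconstruct $\cost(P,Q)$ for every $Q$ up to a $(1\pm\eps)$ factor; and (ii) replace each (dense) centroid $\mu_C$ by a \emph{sparse} surrogate built from few points of $C$.

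\emph{Controlling the error of split groups.} First I would analyze task (i). Only groups that a query $Q$ \emph{splits} between two or more centers contribute error, and by the identity above this error is an over-count $\sum_{p\in C\,\mathrm{split}}u(p)\big(\norm{p-q^\ast}_2^2-\dist^2(p,Q)\big)\ge0$, where $q^\ast$ is the center of $Q$ closest to $\mu_C$. Summing over groups and applying Cauchy--Schwarz I expect a two-sided bound of the form $2\sqrt{V\cdot\cost(P,Q)}+V$ on the absolute error, where $V=\sum_C\var_C$ is the total within-group variance. Since $\cost(P,Q)\ge\OPT(P,k)$ for every $Q$, it suffices to drive $V\le\eps^2\,\OPT(P,k)$ up to a constant, after which the error is at most $\eps\cdot\cost(P,Q)$. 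To produce such a partition I would run \algkmean\ recursively, splitting each still-high-variance group into its $k$-means and keeping the leaves.

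\emph{Sparsifying the centroids (the origin of $1/\eps^2$).} Task (ii) is where both the sparsity and the exponent come from. Each leaf centroid $\mu_C$ is a convex combination of \emph{all} points of $C$, hence generally dense. I would invoke the approximate Carath\'eodory (Maurey) lemma: a point in the convex hull of a set of diameter $\Delta$ admits, for any $\eps>0$, a convex combination of only $O(1/\eps^2)$ of the points lying within $\ell_2$-distance $\eps\Delta$ of it. Replacing $\mu_C$ by such a surrogate $\tilde\mu_C$ makes each kept point a linear combination of $O(1/\eps^2)$ points of $P'$, so its sparsity is at most $s(P)/\eps^2$ as claimed; the extra error is again bounded through the parallel-axis identity and Cauchy--Schwarz by a multiple of $\eps\cdot\cost(P,Q)$ once $\Delta$ is calibrated against $\var_C$. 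Rescaling $\eps$ by a fixed constant then yields the $(k,\eps)$-coreset inequality, and the size is the number of leaves times $O(1/\eps^2)$, which is $k^{O(1/\eps^2)}$ once the leaf count is bounded as below.

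\emph{The main obstacle.} The delicate point is bounding the number of leaves by $k^{O(1/\eps^2)}$, \emph{independently of $d$ and $n$}, while simultaneously guaranteeing $V\le\eps^2\,\OPT(P,k)$ so that the two-sided multiplicative bound holds uniformly over \emph{every} $Q$. Naive recursion does not suffice: in high dimension a group can be ``incompressible'' (its $k$-means cost nearly equals its variance, as for near-orthogonal points), so $V$ need not shrink by a constant factor per level and the leaf count could blow up with $d$. Overcoming this requires a careful charging argument---comparing each group's variance against its own $k$-means cost and exploiting that an incompressible group already forces a large $\cost(C,Q)$ for \emph{every} $Q$, so its relative split-error is automatically small---together with the standard $k$-means dimension reduction to make the clustering effectively low-dimensional before the recursion. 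Tightening this interplay so that the effective depth is $O(1/\eps^2)$ and the branching contributes only the base $k$, giving exactly $k^{O(1/\eps^2)}$ groups, is the crux; the remaining estimates are routine applications of the parallel-axis identity and Cauchy--Schwarz.
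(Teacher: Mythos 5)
Your architecture matches the paper's: partition $P$ into clusters, replace each cluster by its weighted centroid with the within-cluster variance folded into the additive weight $\phi$ (Lemma~\ref{onelemma}), and sparsify each centroid by a Frank--Wolfe / approximate-Carath\'eodory step using $O(1/\eps^2)$ points. But the step you yourself flag as ``the crux'' is exactly the missing idea, and the way you set it up --- driving the total within-group variance $V=\sum_C\var_C$ below $\eps^2\,\OPT(P,k)$ --- is the wrong target: as you observe, $V$ need not be small for any partition into $k^{O(1/\eps^2)}$ groups (for an $m$-means partition it equals $\OPT(P,m)$, which can be $\Theta(\OPT(P,k))$ for incompressible data). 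The paper's resolution is that the split-error of a cluster $P_i$ is controlled not by $\var(P_i)=\opt(P_i,1)$ but by the \emph{compressibility gap} $\opt(P_i,1)-\opt(P_i,k)$: since the query has only $k$ centers, the Cauchy--Schwarz cross-term involves only the distances from the at most $k$ induced sub-cluster centroids to $\mu(P_i)$, and this telescopes to $\opt(P_i,1)-\opt(P_i,k)$ (Lemma~\ref{main}, Eq.~\eqref{aacd}). Summed over the partition by the $m$-means, the total error term is at most $\OPT(P,m)-\OPT(P,mk)$.

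The second missing piece is how to choose $m$ so that this gap is at most $\eps^2\,\OPT(P,k)$ while keeping $m\le k^{O(1/\eps^2)}$; no recursion, charging argument, or dimension reduction is needed. One simply considers the chain $\OPT(P,k)\ge\OPT(P,k^2)\ge\cdots\ge 0$ and applies pigeonhole (Lemma~\ref{l2}): at most $1/\eps^2$ consecutive gaps can each exceed $\eps^2\,\OPT(P,k)$, since they sum to at most $\OPT(P,k)$, so some $t<1+1/\eps^2$ satisfies $\OPT(P,k^t)-\OPT(P,k^{t+1})\le\eps^2\,\OPT(P,k)$, and $m=k^t\le k^{1/\eps^2}$ is the required number of groups. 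This single telescoping observation is what simultaneously bounds the coreset size and the approximation error, uniformly over every query $Q$; without it your proposal does not close.
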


By plugging this result to the traditional merge-and-reduce tree below, it is straight-forward to compute a coreset using one pass over a stream of points. Previous results computed such coresets using $O(k/\eps)$ points but $O(dk/\eps)$ memory words (coordinates of points) which is inefficient when, say, $d\geq n$. In contrast, the coreset below is computed in memory of size that is independent of $d$. Similarly, when the input is distributed among few machines, previous results had to communicate coresets of size at least linear in $d$ between the machines, while communicating the coreset below will take number of bits that is independent of $d$.

\begin{corollary}\label{cor}
A $(k,\eps \log n)$ coreset $(S,w,\phi)$ of size $|S|=\log(n)\cdot k^{O(1/\eps^2)}$ and maximum sparsity $s(P)/\eps^2$
 can be computed for the set $P$ of the $n$ points seen so far in an unbounded stream, using $|S|\cdot s(P)/\eps^2$
 memory words. The insertion time per point in the stream is $\log (n)\cdot 2^{(k/\eps)^{O(1)}}$. If the stream is distributed uniformly to $M$ machines, then the amortized insertion time per point is reduced by a (multiplicative) factor of $M$ to $(1/M)\log (n)\cdot 2^{(k/\eps)^{O(1)}}$. The coreset for the union of streams can then be computed by communicating the $M$ coresets to a main server.
\end{corollary}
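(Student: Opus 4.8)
The plan is to instantiate the standard merge-and-reduce tree on top of the single-shot construction of Theorem~\ref{thm1}, and then to track how the error, the memory, the running time and — most delicately — the sparsity evolve as coresets are repeatedly merged and re-compressed. The starting point is two algebraic facts, each a one-line consequence of the definition of a $(k,\eps)$-coreset. \emph{Merge:} since $\cost(\cdot,Q)$ is additive over a disjoint union of weighted sets (the additive weights $\rho,\phi$ simply sum), if $S_1$ is a $(k,\eps)$-coreset for $P_1$ and $S_2$ is a $(k,\eps)$-coreset for $P_2$, then $S_1\cup S_2$ is a $(k,\eps)$-coreset for $P_1\cup P_2$, with \emph{no} loss in $\eps$. \emph{Reduce:} if $T$ is a $(k,\eps)$-coreset for $S$ and $S$ is a $(k,\eps)$-coreset for $P$, then chaining the two-sided inequalities of the definition gives that $T$ is a $(k,(1+\eps)^2-1)$-coreset for $P$; hence $h$ nested reductions compose multiplicatively to an error factor $(1+\eps)^h$.

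Next I would lay out the tree. Stream the incoming points into leaf buckets of size $k^{O(1/\eps^2)}$; whenever two coresets of the same level are present, merge them and apply Theorem~\ref{thm1} to re-compress the union back down to $k^{O(1/\eps^2)}$ points. Over the $n$ points seen so far this binary tree has height $h=O(\log n)$, and at every moment at most one coreset per level is alive, so at most $O(\log n)$ coresets — a total of $|S|=\log(n)\cdot k^{O(1/\eps^2)}$ points — reside in memory. Along any leaf-to-root path there are at most $h$ reduce steps while merges are lossless, so the root is a coreset with error factor $(1+\eps)^{O(\log n)}$; invoking $(1+\eps)^{\log n}\le 1+O(\eps\log n)$ turns this into the claimed $(k,\eps\log n)$-coreset. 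The per-point cost follows from the usual amortization: each point is touched by $O(\log n)$ re-compressions over its lifetime, and each invocation of the construction of Theorem~\ref{thm1} on $k^{O(1/\eps^2)}$ weighted points runs in $2^{(k/\eps)^{O(1)}}$ time, giving amortized insertion time $\log(n)\cdot 2^{(k/\eps)^{O(1)}}$ per point.

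The hard part will be the sparsity bookkeeping, which is exactly where this corollary departs from the classical dense merge-and-reduce. Theorem~\ref{thm1} only promises that each output point is a linear combination of $O(1/\eps^2)$ of \emph{its own} input points, so a naive chaining across the $O(\log n)$ levels would let the number of underlying original points — and hence the number of nonzero coordinates of each stored point — compound by a factor $1/\eps^2$ per level, destroying the $d$-independence. To pin the sparsity at $s(P)/\eps^2$ regardless of the tree depth, I would exploit the fact that each coreset point produced by Theorem~\ref{thm1} is a weighted centroid, and that a weighted centroid of weighted centroids of original points is again a \emph{single} weighted centroid of the underlying original points rather than a nested expression. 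The task then reduces to bounding the support of one such mean, and the key property to extract from the construction of Theorem~\ref{thm1} is that this mean is an average over only $O(1/\eps^2)$ original points at every level — equivalently, that the algorithm can afford to retain those $O(1/\eps^2)$ backing points per coreset point, which is precisely what makes the total memory $|S|\cdot s(P)/\eps^2$ words. Verifying that this bound does not inflate as centroids are re-compressed up the tree is the main obstacle, and it is where the specific structure of Theorem~\ref{thm1}, rather than a black-box coreset guarantee, is essential.

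Finally, for the distributed statement I would partition the stream uniformly among the $M$ machines, run the streaming construction above independently on each $\approx n/M$-point substream; since the total sequential work is split evenly, the amortized per-point time drops by the factor $M$ to $(1/M)\log(n)\cdot 2^{(k/\eps)^{O(1)}}$. Each machine then ships its single top-level coreset — $M\cdot k^{O(1/\eps^2)}$ sparse points in total — to one server, which finishes with a single merge followed by one reduce. By the merge and reduce facts above this final combination preserves the $(k,\eps\log n)$ guarantee while keeping the per-point sparsity at $s(P)/\eps^2$, completing the proof.
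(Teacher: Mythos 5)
Your proposal follows exactly the route the paper itself takes: the paper gives no standalone proof of Corollary~\ref{cor}, but derives it by plugging Theorem~\ref{thm1} into the traditional merge-and-reduce tree of Fig.~\ref{fig:tree}, and your accounting of the error accumulation (lossless merges, $(1+\eps)^{O(\log n)}$ over the reduce steps, hence a $(k,\eps\log n)$-coreset when $\eps\log n=O(1)$), the $O(\log n)$ live coresets in memory, and the amortized insertion time is the standard argument the paper is invoking.

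The one place your write-up does not close the argument is the sparsity claim, and you are right to single it out: Theorem~\ref{thm1} only guarantees that each output point is a combination of $O(1/\eps^2)$ points \emph{of the set it was run on}, so a coreset point at level $j$ of the tree is a priori a combination of $O(1/\eps^{2j})$ original points, and the observation that ``a centroid of centroids is a centroid'' flattens the nesting but does not shrink the support back to $O(1/\eps^2)$ original points. As written, your proof establishes maximum sparsity $s(P)/\eps^{O(\log n)}$ at the root, not $s(P)/\eps^2$, and you explicitly label the missing step an ``obstacle'' rather than resolving it. To be fair, the paper does not resolve it either---the corollary is asserted without a proof of the sparsity bound across levels---so your proposal is faithful to the paper's approach while being more candid about where the black-box use of Theorem~\ref{thm1} is insufficient. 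Closing this gap would require an additional property of the Frank--Wolfe-based construction in Line~\ref{four} (e.g., that re-compressing a set of sparse centroids can be charged to a fresh set of $O(1/\eps^2)$ original representatives), which neither you nor the paper supplies.
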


\begin{figure}[htbp]
\centering
{{\includegraphics[width=0.5\textwidth, height=4cm]{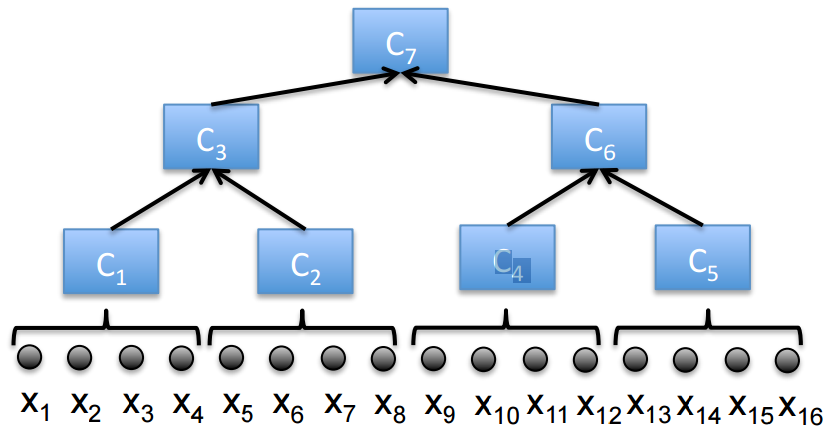}}
 }
 \caption{\small \it
 Tree construction for generating coresets in parallel or from data streams~\cite{sariela}. Black arrows indicate ``merge-and-reduce'' operations. The intermediate coresets $C_{1},\dots,C_{7}$ are numbered in the order in which they would be generated in the streaming case. In the parallel case, $C_{1},C_{2},C_{4}$ and $C_{5}$ would be constructed in parallel, followed by  $C_{3}$ and $C_{6}$, finally resulting in $C_{7}$. The Figure is from~\cite{feldman2011scalable}.
 \label{fig:tree}
 }
\end{figure}

\section{Coreset Construction}
 Our main coreset construction algorithm $\algkmean(P,k,\eps)$ gets a set $P$ as input, and returns a $(k,\eps)$-coreset $(S,w)$; see Algorithm~\ref{algk}.
 
To obtain running time that is linear in the input size, without loss of generality, we assume that $P$ has $|P|=k^{O(1/\eps^2)}$ points, and that the cardinality of the output $S$ is $|S|\leq |P|/2$. This is thanks to the traditional merge-and-reduce approach: given a stream of $n$ points, we apply the coreset construction only on subsets of size $2\cdot |S|$ from $P$ during the streaming and reduce them by half. See Fig.~\ref{fig:tree} and e.g.~\cite{feldman2011scalable,FSS13} for details.

\paragraph{Algorithm overview.}
In Line~\ref{one} we compute the smallest integer $m=k^t$ such that the cost $\OPT(P,m)$ of the $m$-means of $P$ is close to the cost $\OPT(P,mk)$ of the $(mk)$-means of $P$.
In Line~\ref{two} we compute the corresponding partition $\br{P_1,\cdots,P_{m}}$ of $P$ by its $m$-means $Q_m=\br{q_1,\cdots,q_m}$.
In Line~\ref{four} a $(1,\eps)$-sparse coreset $S_i$ of size $O(1/\eps^2)$ is computed for every $P_i$, $i\in[m]$.
This can be done deterministically e.g. by taking the mean of $P_i$ as explained in Lemma~\ref{onelemma} or by using a gradient descent algorithm, namely Frank-Wolfe, as explained in~\cite{onecenter} which also preserves the sparsity of the coreset as desired by our main theorem. The output of the algorithm is the union of the means of all these coresets, with appropriate weight, which is the size of the coreset.

\paragraph{Comments.}
Line~\ref{one} is the only line in the algorithm that takes time exponential in $k$. As stated in Section~\ref{our} this is unavoidable for a $(1+\eps)$-approximation, where $\eps$ is an arbitrarily small constant. Still, the memory that is required by our algorithm is polynomial in $k$, and the running time is near-linear in $n$. In Section~\ref{sec:xp} this line will be replaced by $k$-means heuristic or $\Omega(1)$-approximation to yield a practical algorithm for large values of $k$.

\begin{algorithm2e}[!h]
    \caption{$\algkmean(P,k,\eps)$\label{algk}}
\begin{tabbing}
\textbf{Input:} \=A weighted set $P$ of points in $\REAL^d$, integer $k\geq 1$, and error parameter $\eps\in(0,1/4)$.\\
\textbf{Output:} \=A $(k,\eps)$-coreset $S$ for $P$ that satisfies Theorem~\ref{thm1}.
\end{tabbing}
\vspace{-0.3cm}
\nl    \label{one}Compute the smallest integer $t\geq0$ such that $\OPT(P,k^t)-\OPT(P,k^{t+1})\leq \eps^2 \cdot \OPT(P,k)$\\
\tcc{$\OPT(P,m)$ is the $m$-mean cost of $P$.}
 Set $m\gets k^t$\\
   Set $\{P_1, \dots , P_{m}\}\gets$ the partition of $P$ by $\OPT(P,m)$\label{two}\\
    \label{foreach}\For{$i:=1$ \textbf{\emph{to}} $m$} {
		Compute a $(1,\eps)$-coreset $S_i=(S'_i,w_i,\phi_i)$ for $P_i$\label{four}\\
        Set $w(\mu(S_i)) \leftarrow \sum_{p\in S'_i} w_i (p)$
     }
     Set $S' \leftarrow \bigcup_{i=1}^m \mu(S_i)$ \\
     Set $\phi \leftarrow \sum_{i=1}^m \cost(S_i,\mu(S_i))$ \\
     Set $S\gets (S', w, \phi)$ \\
	\Return $S$
\end{algorithm2e}

\section{Proof of Correctness}
In this section we prove that a call to $\algkmean(P,k,\eps)$ indeed returns a small $(k,\eps)$-coreset; see Algorithm~\ref{algk}.
We use the variable names as defined in the algorithm (e.g. $t$, $m$, $S$) and consider their corresponding values during the last line of the algorithm.
We also identify the input weighted set $P$ by $P=(P',u,\rho)$.

The first lemma states the common fact that the sum of squared distances of a set of point to a center is the sum of their squared distances to their center of mass, plus the squared distance to the center (the variance of the set).
\begin{lemma}\label{onelemma}
For every $x\in\REAL^d$
\[
\cost(P,x)=\cost(P,\mu(P))+\norm{\mu(P)-x}^2\sum_{p\in P}u(p).
\]
\end{lemma}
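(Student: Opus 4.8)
The plan is to reduce the claimed identity to the classical parallel-axis (bias--variance) decomposition for weighted point sets, and then verify it by a direct expansion. First I would specialize the general cost to a single center: since $\dist(p,\br{x})=\norm{p-x}$ when there is only one center $x$, we have $\cost(P,x)=\sum_{p\in P'}u(p)\norm{p-x}^2+\rho$, and likewise $\cost(P,\mu(P))=\sum_{p\in P'}u(p)\norm{p-\mu(P)}^2+\rho$. The additive weight $\rho$ then appears on both sides of the asserted equation and cancels, so it suffices to establish the identity for the weighted sum of squared distances alone, i.e. without the $\rho$ terms.

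Next I would abbreviate $\mu:=\mu(P)$ and, for each $p\in P'$, split $p-x=(p-\mu)+(\mu-x)$. Expanding the squared norm yields $\norm{p-x}^2=\norm{p-\mu}^2+2\inn{p-\mu,\mu-x}+\norm{\mu-x}^2$. Multiplying by $u(p)$ and summing over $p\in P'$ produces three contributions: the desired term $\sum_{p}u(p)\norm{p-\mu}^2$, a linear cross term $2\inn{\sum_{p}u(p)(p-\mu),\,\mu-x}$, and the term $\norm{\mu-x}^2\sum_{p}u(p)$ that matches the right-hand side of the statement.

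The only step that uses the structure of the problem---and the one I would treat as the crux---is showing that the cross term vanishes. This follows directly from the definition of the weighted centroid: $\sum_{p\in P'}u(p)(p-\mu)=\sum_{p\in P'}u(p)p-\mu\sum_{p\in P'}u(p)=0$, since $\mu=\left(\sum_{p'\in P'}u(p')\right)^{-1}\sum_{p\in P'}u(p)p$ by definition. Once the linear term is eliminated, the two surviving terms give exactly $\cost(P,\mu(P))-\rho+\norm{\mu-x}^2\sum_{p}u(p)$, and adding back the cancelled $\rho$ recovers the claimed formula. I do not anticipate a genuine obstacle here; the only care needed is in bookkeeping the multiplicative weights and in confirming that the centroid identity, hence the vanishing of the cross term, holds for arbitrary nonnegative weights $u$.
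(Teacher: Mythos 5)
Your proposal is correct and follows essentially the same route as the paper: expand $\norm{p-x}^2$ via the split $(p-\mu)+(\mu-x)$, observe that the cross term vanishes by the definition of the weighted centroid, and account for the additive weight $\rho$ appearing on both sides. No gaps; this is the standard parallel-axis argument the paper itself uses.
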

\begin{proof}
We have
\[
\begin{split}
\cost(P,x)-\rho
&=\sum_{p\in P}u(p)\norm{p-x}^2
=\sum_{p\in P}u(p)\norm{(p-\mu(P))+(\mu(P_i)-x)}^2\\
&=\sum_{p\in P}u(p)\norm{p-\mu(P)}^2 +\sum_{p\in P}u(p)\norm{\mu(P)-x}^2+ 2(\mu(p)-x)\cdot \sum_{p\in P}u(p)(p-\mu(P)).
\end{split}
\]
The last term equals zero since $\mu(P)=\cfrac{1}{\sum_{p'\in P}u(p')}\cdot \sum_{p\in P}u(p)\cdot p$, and thus
\[
\begin{split}
\sum_{p\in P}u(p)(p-\mu(P))
&=\sum_{p\in P}u(p)\cdot p-\sum_{p\in P}u(p)\mu(P)
=\sum_{p\in P}u(p)\cdot p-\sum_{p\in P}u(p)\cdot p=0.
\end{split}
\]
Hence,
\[
\cost(P,x)
=\rho+\sum_{p\in P}u(p)\norm{p-\mu(P)}^2 +\sum_{p\in P}u(p)\norm{\mu(P)-x}^2
=\cost(P,\mu(P))+\norm{\mu(P)-x}^2\sum_{p\in P}u(p).
\]
\end{proof}

The second lemma shows that assigning all the points of $P$ to the closest center in $Q$ yields a small multiplicative error if the $1$-mean and the $k$-mean of $P$ has roughly the same cost. If $t=0$, this means that we can approximate $\cost(P,Q)$  using only one center in the query; see Line~\ref{one} of Algorithm~\ref{algk}. Note that $(1+2\eps)/(1-2\eps)\leq 1+4\eps$ for $\eps<1/4$.
\begin{lemma}\label{main}
For every set $Q\subseteq \REAL^d$ of $|Q|=k$ centers we have
\begin{equation}\label{eq55}
 \cost(P,Q)\leq \min_{q \in Q}\cost(P, \q) \leq \cost(P,Q)\cdot\frac{1+2\eps}{1-2\eps}+\frac{\opt(P,1)-\opt(P,k)}{(1-2\eps)\eps}.
\end{equation}
\end{lemma}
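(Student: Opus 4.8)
The plan is to reduce everything to the centroids of the clusters induced by $Q$ and then invoke Lemma~\ref{onelemma}. The left inequality is immediate: assigning every point of $P$ to a single center $q\in Q$ is a feasible (if suboptimal) way of serving $P$ with the $k$ centers of $Q$, so $\cost(P,Q)\le\cost(P,\{q\})$ for each $q$, and in particular $\cost(P,Q)\le\min_{q\in Q}\cost(P,\{q\})$. For the right inequality, let $\{P_1,\dots,P_k\}$ be the partition of $P$ by $Q$, write $U_i=\sum_{p\in P_i}u(p)$, $U=\sum_i U_i$, and $\mu_i=\mu(P_i)$. Applying Lemma~\ref{onelemma} to the whole set shows that $\min_{q\in Q}\cost(P,\{q\})=\opt(P,1)+U\cdot\min_{q\in Q}\norm{\mu(P)-q}^2$, so the entire task is to bound the single scalar $U\cdot\min_{q\in Q}\norm{\mu(P)-q}^2$.

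Next I would record the two structural facts that drive the argument, both obtained by applying Lemma~\ref{onelemma} clusterwise. Decomposing $\opt(P,1)=\cost(P,\mu(P))$ over the partition gives the identity $\opt(P,1)=\sum_i \opt(P_i,1)+\sum_i U_i\norm{\mu_i-\mu(P)}^2$, while decomposing $\cost(P,Q)=\sum_i\cost(P_i,q_i)$ gives $\cost(P,Q)=\sum_i\opt(P_i,1)+\sum_i U_i\norm{\mu_i-q_i}^2$. The crucial observation is that $\{\mu_1,\dots,\mu_k\}$ is a feasible set of $k$ centers, hence $\sum_i\opt(P_i,1)\ge\opt(P,k)$. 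Substituting this into the two identities yields $\sum_i U_i\norm{\mu_i-\mu(P)}^2\le \opt(P,1)-\opt(P,k)$ and $\sum_i U_i\norm{\mu_i-q_i}^2\le \cost(P,Q)-\opt(P,k)$, which are precisely the two quantities that will surface after a triangle-inequality step.

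Finally, I would bound the minimum by a weighted average, $\min_{q\in Q}\norm{\mu(P)-q}^2\le \frac1U\sum_i U_i\norm{\mu(P)-q_i}^2$, and expand each summand with the weak triangle inequality $\norm{\mu(P)-q_i}^2\le(1+1/\gamma)\norm{\mu(P)-\mu_i}^2+(1+\gamma)\norm{\mu_i-q_i}^2$ for a parameter $\gamma>0$ to be fixed. Summing against the weights $U_i$ and inserting the two bounds above gives $U\cdot\min_{q\in Q}\norm{\mu(P)-q}^2\le(1+1/\gamma)(\opt(P,1)-\opt(P,k))+(1+\gamma)(\cost(P,Q)-\opt(P,k))$. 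Writing $\opt(P,1)=\opt(P,k)+(\opt(P,1)-\opt(P,k))$ and discarding the resulting $-\gamma\,\opt(P,k)\le0$ term collapses everything to $\min_{q\in Q}\cost(P,\{q\})\le(1+\gamma)\cost(P,Q)+(2+1/\gamma)(\opt(P,1)-\opt(P,k))$. Choosing $\gamma=2\eps$ and using $\eps<1/4$ (so that $2+1/(2\eps)\le 1/\eps$ and $1+2\eps\le(1+2\eps)/(1-2\eps)$) makes this at least as strong as the claimed inequality.

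I expect the main obstacle to be tightness rather than mechanics: a careless triangle-inequality step, or the looser estimates $\sum_i U_i\norm{\mu_i-q_i}^2\le\cost(P,Q)$ and $\opt(P,1)\le\cost(P,Q)+(\opt(P,1)-\opt(P,k))$, produces a leading coefficient of roughly $2$ on $\cost(P,Q)$, which does not meet the target factor $(1+2\eps)/(1-2\eps)\approx 1$. It is therefore essential to carry the $-\opt(P,k)$ terms through \emph{both} structural identities and let them cancel the excess, and to tie the triangle-inequality parameter to $\eps$ exactly. The one inequality that must be gotten right is $\sum_i\opt(P_i,1)\ge\opt(P,k)$, since both key bounds hinge on it.
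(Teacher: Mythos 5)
Your proof is correct, and it takes a genuinely different route from the paper's. Both arguments rest on the same two ingredients --- the variance decomposition of Lemma~\ref{onelemma} applied clusterwise, and the inequality $\sum_i\opt(P_i,1)\geq\opt(P,k)$ (the cluster centroids form a feasible $k$-center set), which yields the bound $\sum_i U_i\norm{\mu_i-\mu(P)}^2\leq\opt(P,1)-\opt(P,k)$ appearing in both proofs. But the mechanics diverge from there. The paper bounds the difference $\cost(P,\{q^*\})-\cost(P,Q)$ pointwise: it expands $\norm{q_p^*-q^*}^2-\norm{q_p^*-q_p}^2$ around $\mu(P)$, kills the difference-of-squares term using the optimality of $q^*$, applies Cauchy--Schwarz with the $\sqrt{\eps}$ weighting to the remaining cross term, and ends with a self-referential inequality (the bound $\sum_p u(p)\norm{q^*-q_p}^2\leq 2(\cost(P,\{q^*\})+\cost(P,Q))$ puts $\cost(P,\{q^*\})$ on both sides) that must be rearranged, which is where the $1/(1-2\eps)$ factors come from. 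You instead isolate the single scalar $U\cdot\min_{q\in Q}\norm{\mu(P)-q}^2$, bound the minimum by the weighted average over the cluster centers $q_i$, and apply the parametrized inequality $\norm{a+b}^2\leq(1+1/\gamma)\norm{a}^2+(1+\gamma)\norm{b}^2$; crucially you also carry the second identity $\sum_i U_i\norm{\mu_i-q_i}^2\leq\cost(P,Q)-\opt(P,k)$, which lets the excess $\gamma\,\opt(P,k)$ cancel and avoids any rearrangement. The payoff is a strictly stronger bound, $(1+2\eps)\cost(P,Q)+(2+\tfrac{1}{2\eps})(\opt(P,1)-\opt(P,k))$, which dominates the stated right-hand side for $\eps\in(0,1/4)$; the paper's route is more local (no need to introduce the weighted average) but pays the $1/(1-2\eps)$ tax. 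One shared technicality that neither argument addresses explicitly: empty clusters $P_i$ have undefined centroids and should be discarded from the sums, which is harmless.
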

\begin{proof}
Let $q^*$ denote a center that minimizes $\cost(P, \q)$ over $q\in Q$.
The left inequality of~\eqref{eq55} is then straight-forward since
\begin{equation}\label{left}
\begin{split}
\cost(P,Q)-\min_{q \in Q}\cost(P, \q)
&=\sum_{p\in P'}\min_{q\in Q}u(p)\norm{p-q}^2-\sum_{p\in P'}u(p)\norm{p-q^*}^2\\
&=\sum_{p\in P'}u(p)\left(\min_{q\in Q}\norm{p-q}^2-\norm{p-q^*}^2\right)
\leq 0.
\end{split}
\end{equation}
It is left to prove the right inequality of~\eqref{eq55}. Indeed, for every $p\in P'$, let $q_p\in Q$ denote the closest point to $p$ in $Q$. Ties are broken arbitrarily. Hence,
\[
\min_{q \in Q}\cost(P, \q)-\cost(P,Q)
=\sum_{p\in P'}u(p)\norm{p-q^*}^2-\sum_{p\in P'}u(p)\norm{p-q_p}^2.
\]
Let $\br{P_1,\cdots,P_k}$ denote the partition of $P$ by $Q=\br{q^1,\cdots,q^k}$, where $P_i$ are the closest points to $q^i$ for every $i\in[k]$; see Section~\ref{sec:coreset}.
For every $p\in P'_i$, let $q_p^*=\mu(P_i)$. Hence,
\begin{align}
&\sum_{p\in P'_i}u(p)\norm{p-q^*}^2-\sum_{p\in P'_i}u(p)\norm{p-q_p}^2
\label{aa}=\sum_{p\in P'_i}u(p)\norm{p-\mu(P_i)}^2+\norm{\mu(P_i)-q^*}^2\sum_{p\in P'_i}u(p)\\
&\label{cc}-\left(\sum_{p\in P'_i}u(p)\norm{p-\mu(P_i)}^2+\norm{\mu(P_i)-q^i}^2\sum_{p\in P'_i}u(p)\right)\\
&\label{bb}=\sum_{p\in P'_i}u(p)\left(\norm{q_p^*-q^*}^2
-\norm{q_p^*-q_p}^2\right),
\end{align}
where in~\eqref{aa} and~\eqref{cc} we substituted $x=\mu(P_i)$ and $x=q_i$ respectively in Lemma~\ref{onelemma}, and in~\eqref{bb} we use the fact that $q_p^*=\mu(P_i)$ and $q_p=q^i$ for every $p\in P_i$.
Summing~\eqref{bb} over $i\in[k]$ yields
\begin{align}
&\nonumber\sum_{p\in P'}u(p)\norm{p-q^*}^2-\sum_{p\in P'}u(p)\norm{p-q_p}^2
=\sum_{p\in P'}u(p)\left(\norm{q_p^*-q^*}^2
-\norm{q_p^*-q_p}^2\right)\\
&=\sum_{p\in P'}u(p)\left(
\norm{(q_p^*-\mu(P))+(\mu(P)-q^*)}^2
-\norm{(q_p^*-\mu(P))+(\mu(P)-q_p)}^2\right)\\
&\label{kkc}=\sum_{p\in P'}u(p)\left(\norm{\mu(P)-q^*}^2-\norm{\mu(P)-q_p}^2\right)\\
&-2\sum_{p\in P'}u(p)(q_p^*-\mu(P))(q^*-q_p).
\end{align}

To bound~\eqref{kkc}, we substitute $x=q^*$ and then $x=q$ in Lemma~\ref{onelemma}, and obtain that for every $q\in Q$
\[
\begin{split}
\left(\norm{\mu(P)-q^*}^2-\norm{\mu(P)-q}^2\right)\sum_{p\in P'}u(p)
&=\cost(P, \{q^*\})-\cost(P,\mu(P))
-\left(\cost(P, \q)-\cost(P,\mu(P))\right)\\
&=\cost(P, \{q^*\})-\cost(P, \q)\leq 0.
\end{split}
\]
where the last inequality is by the definition of $q^*$. This implies that for every $p\in P'$,
\[
\norm{\mu(P)-q^*}^2-\norm{\mu(P)-q_p}^2\leq 0.
\]
Plugging the last inequality in~\eqref{kkc} yields
\begin{align}
&\sum_{p\in P'}u(p)\norm{p-q^*}^2-\sum_{p\in P'}u(p)\norm{p-q_p}^2\leq -2\sum_{p\in P'}u(p)(q_p^*-\mu(P))(q^*-q_p)\\
\label{cau}&\leq 2\sum_{p\in P'}u(p)\norm{q_p^*-\mu(P)}\norm{q^*-q_p}
=\sum_{p\in P'}u(p)\cdot 2\cdot\frac{\norm{q_p^*-\mu(P)}}{\sqrt{\eps}}\cdot \sqrt{\eps} \norm{q^*-q_p}\\
\label{ggt}&\leq \sum_{p\in P'}u(p) \left(\frac{\norm{q_p^*-\mu(P)}^2}{\eps}
+\eps\norm{q^*-q_p}^2\right)\\
&\label{kka}=\frac{1}{\eps}\sum_{p\in P'}u(p)\norm{q_p^*-\mu(P)}^2+\eps\sum_{p\in P'}u(p)\norm{q^*-q_p}^2,
\end{align}
where~\eqref{cau} is by Cauchy-Schwartz inequality, and in~\eqref{ggt} we use the fact that $2ab\leq a^2+b^2$ for every $a,b\geq 0$.

To bound the left term of~\eqref{kka} we use the fact $q_p^*=\mu(P_i)$ and substitute $x=\mu(P)$, $P=P_i$ in Lemma~\ref{onelemma} for every $i\in[k]$ as follows.
\begin{equation}\label{aacd}
\begin{split}
&\sum_{p\in P'}u(p)\norm{q_p^*-\mu(P)}^2
=\sum_{i=1}^k \norm{\mu(P_i)-\mu(P)}^2\sum_{p\in P'_i}u(p)
=\sum_{i=1}^k \left(\sum_{p\in P'_i}u(p)\norm{p-\mu(P)}^2
- \sum_{p\in P'_i}u(p)\norm{p-\mu(P_i)}^2\right)\\
&=\sum_{p\in P'}u(p)\norm{p-\mu(P)}^2
- \sum_{i=1}^k \sum_{p\in P'_i}u(p)\norm{p-\mu(P'_i)}^2
\leq \opt(P,1)-\opt(P,k).
\end{split}
\end{equation}

To bound the right term of~\eqref{kka} we use $(a-b)^2\leq a^2+b^2+2|ab|\leq 2a^2+2b^2$ to obtain
\[
\begin{split}
\sum_{p\in P'}u(p)\norm{q^*-q_p}^2
&=\sum_{p\in P'}u(p)\norm{(q^*-p)+(p-q_p)}^2\\
&\leq \sum_{p\in P'}u(p)\left(2\norm{q^*-p}^2+2\norm{p-q_p}^2\right)
=2\cdot\left(\cost(P,\{q^*\})+\cost(P,Q))\right).
\end{split}
\]

Plugging~\eqref{aacd} and the last inequality in~\eqref{kka} yields
\[
\begin{split}
\cost(P,\{q^*\})-\cost(P,Q)
&=\sum_{p\in P'}u(p)\norm{p-q^*}^2-\sum_{p\in P'}u(p)\norm{p-q_p}^2 \\
&\leq \frac{\opt(P,1)-\opt(P,k)}{\eps}+2\cdot\eps\left(\cost(P,\{q^*\})+\cost(P,Q)\right).
\end{split}
\]
Rearranging,
\[
\begin{split}
\cost(P,\{q^*\})&\leq \cost(P,Q)\cdot\frac{1+2\eps}{1-2\eps}+\frac{\opt(P,1)-\opt(P,k)}{(1-2\eps)\eps}\end{split}
\]
Together with~\eqref{left} this proves Lemma~\ref{main}.
\end{proof}

\begin{lemma}\label{1coreset}
Let $S$ be a $(1, \eps)$-coreset for a weighted set $P$ in $\REAL^d$.
     Let $Q \subseteq \REAL^d$ be a finite set. Then
    \begin{equation}\label{eqlemmaone}
        (1-\eps) \min_{q \in Q}\cost(P, \q)  \leq \min_{q \in Q}\cost (S, \q)  \leq (1+\eps) \min_{q \in Q} \cost(P, \q)
    \end{equation}
    \label{coresetMin}
\end{lemma}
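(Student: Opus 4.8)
\section*{Proof proposal}

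The plan is to reduce the statement to the pointwise $(1\pm\eps)$ guarantee that the coreset definition hands us, and then transfer that guarantee to the minimum by choosing the correct minimizer in each of the two directions. First I would unpack the hypothesis: since $S$ is a $(1,\eps)$-coreset for $P$, the coreset definition applied to the singleton query $Q=\{q\}$ gives, for \emph{every} single center $q\in\REAL^d$,
\[
(1-\eps)\cost(P,\q)\ \leq\ \cost(S,\q)\ \leq\ (1+\eps)\cost(P,\q).
\]
In particular this holds simultaneously for every $q$ in the finite set $Q$, so the entire argument is just a manipulation of these $|Q|$ scalar inequalities.

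For the right inequality of~\eqref{eqlemmaone}, I would let $q^{*}\in\argmin_{q\in Q}\cost(P,\q)$ be a center achieving the $P$-minimum over $Q$. Then, bounding the $S$-minimum by the value at this particular center and invoking the upper half of the pointwise bound,
\[
\min_{q\in Q}\cost(S,\q)\ \leq\ \cost(S,\{q^{*}\})\ \leq\ (1+\eps)\cost(P,\{q^{*}\})\ =\ (1+\eps)\min_{q\in Q}\cost(P,\q).
\]
For the left inequality I would instead let $\tilde q\in\argmin_{q\in Q}\cost(S,\q)$ be a center achieving the $S$-minimum, and apply the lower half of the pointwise bound at $\tilde q$:
\[
\min_{q\in Q}\cost(S,\q)\ =\ \cost(S,\{\tilde q\})\ \geq\ (1-\eps)\cost(P,\{\tilde q\})\ \geq\ (1-\eps)\min_{q\in Q}\cost(P,\q),
\]
where the final step uses that $\cost(P,\{\tilde q\})$ is at least the $P$-minimum over $Q$. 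Combining the two displays yields~\eqref{eqlemmaone}.

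This lemma is essentially routine once the hypothesis is stated correctly, so there is no deep obstacle; the only point requiring care is that the two directions of the inequality must be driven by \emph{different} minimizers. It is tempting to evaluate both $\min_S$ and $\min_P$ at a single common center, but the $P$-optimal center $q^{*}$ and the $S$-optimal center $\tilde q$ need not coincide. Using $q^{*}$ for the upper bound and $\tilde q$ for the lower bound is exactly what lets each scalar inequality be applied in its favorable direction, and this asymmetric choice is the one place a proof could go wrong.
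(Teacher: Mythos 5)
Your proposal is correct and follows essentially the same route as the paper: both directions are obtained by evaluating the pointwise coreset guarantee at the appropriate minimizer ($q^*$ for $P$ in the upper bound, $\tilde q$ for $S$ in the lower bound). If anything, your left-hand chain is the cleaner one — the paper detours through a factor $(1-\eps)(1+\eps)=1-\eps^2$ before discarding it, whereas you apply the bound $(1-\eps)\cost(P,\{\tilde q\})\leq\cost(S,\{\tilde q\})$ directly in the form the coreset definition provides.
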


\begin{proof}
    Let $q_P \in Q$ be a center such that $\cost(P,\{q_P\})=\min_{q \in Q}\cost(P, \q)$, and let
$q_S \in Q$ be a center such that $\cost(S, \{q_S\})=\min_{q \in Q}\cost(S, \q)$. The right side of~\eqref{eqlemmaone} is bounded by
\[
        \min_{q \in Q}\cost(S, \q)
        = \cost(S, \{q_S\})
        \leq \cost (S, \{q_P\})
         \leq (1 + \eps)\cost(P, \{q_P\})
        = (1 + \eps)\min_{q \in Q}\cost(P, \q), \label{finCoresetA}
\]
  where the first inequality is by the optimality of $q_S$, and the second inequality is since $S$ is a coreset for $P$.
  Similarly, the left hand side of~\eqref{eqlemmaone} is bounded by
\[
\begin{split}
(1 - \eps)\min_{q \in Q}\cost(P, \q)
=(1 - \eps)\cost(P, \{q_P\})
& \leq (1 - \eps)\cost(P, \{q_S\})
\leq (1 - \eps)(1+\eps)\cost(S, \{q_S\})\\
&=(1-\eps^2) \min_{q \in Q}\cost(S, \q)
\leq \min_{q \in Q}\cost(S, \q).
\end{split}
\]
where the last inequality follows from the assumption $\eps<1$.
\end{proof}

\begin{lemma}\label{mainthm}
Let $S$ be the output of a call to $\algkmean(P, k,\eps)$. Then $S$ is a $(k,15\eps)$-coreset for $P$.
\end{lemma}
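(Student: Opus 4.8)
The plan is to evaluate $\cost(S,Q)$ cluster-by-cluster and reduce every $k$-center query to a single-center query on each cluster, where the $(1,\eps)$-coreset guarantee (Lemma~\ref{1coreset}) and the single-vs-$k$-center comparison (Lemma~\ref{main}) can be invoked. First I would split the coreset cost as $\cost(S,Q)=\sum_{i=1}^m\left(w(\mu(S_i))\,\dist^2(\mu(S_i),Q)+\cost(S_i,\mu(S_i))\right)$, using that $\phi=\sum_i\cost(S_i,\mu(S_i))$. The crucial identity is that each summand equals $\min_{q\in Q}\cost(S_i,\q)$: since $\dist^2(\mu(S_i),Q)=\min_{q\in Q}\norm{\mu(S_i)-q}^2$ and $w(\mu(S_i))=\sum_{p\in S'_i}w_i(p)$, applying Lemma~\ref{onelemma} to the weighted set $S_i$ with $x=q$ gives $w(\mu(S_i))\norm{\mu(S_i)-q}^2+\cost(S_i,\mu(S_i))=\cost(S_i,\q)$. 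Hence $\cost(S,Q)=\sum_{i=1}^m\min_{q\in Q}\cost(S_i,\q)$; that is, the coreset evaluates each cluster at its single best center in $Q$.

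Next I would peel the errors off one layer at a time. Because $S_i$ is a $(1,\eps)$-coreset of $P_i$, Lemma~\ref{1coreset} sandwiches $\min_{q\in Q}\cost(S_i,\q)$ between $(1\pm\eps)\min_{q\in Q}\cost(P_i,\q)$, so it suffices to control $\sum_i\min_{q\in Q}\cost(P_i,\q)$. For this I apply Lemma~\ref{main} to each $P_i$ separately: its left inequality yields $\sum_i\min_{q\in Q}\cost(P_i,\q)\geq\sum_i\cost(P_i,Q)=\cost(P,Q)$ (the partition is additive in cost, with the $\rho/m$ pieces summing back to $\rho$), while its right inequality gives the upper bound $\tfrac{1+2\eps}{1-2\eps}\cost(P,Q)+\tfrac{1}{(1-2\eps)\eps}\sum_i\left(\opt(P_i,1)-\opt(P_i,k)\right)$.

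The heart of the argument is bounding this additive term via the stopping rule of Line~\ref{one}. I would prove the two identities $\sum_i\opt(P_i,1)=\opt(P,m)$ (each cluster's optimal single center is its centroid, and these are precisely the $m$-means of $P$) and $\sum_i\opt(P_i,k)\geq\opt(P,mk)$ (the union of the per-cluster optimal $k$-center sets is a feasible set of $mk$ centers for $P$, and each point can only do better against the larger center set). Together these give $\sum_i\left(\opt(P_i,1)-\opt(P_i,k)\right)\leq\opt(P,m)-\opt(P,mk)=\opt(P,k^t)-\opt(P,k^{t+1})\leq\eps^2\opt(P,k)\leq\eps^2\cost(P,Q)$, the last step using $\opt(P,k)\leq\cost(P,Q)$ since $|Q|=k$. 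Assembling the three layers gives $(1-\eps)\cost(P,Q)\leq\cost(S,Q)\leq(1+\eps)\tfrac{1+3\eps}{1-2\eps}\cost(P,Q)$, and a short check shows that for $\eps<1/4$ both factors lie inside the $(1\pm 15\eps)$ window.

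The main obstacle I anticipate is the careful bookkeeping of the additive weights $\rho/m$ through the two identities $\sum_i\opt(P_i,1)=\opt(P,m)$ and $\sum_i\opt(P_i,k)\geq\opt(P,mk)$, since these are exactly where the choice of $m$ in Line~\ref{one} is converted into the $\eps^2\,\opt(P,k)$ slack that feeds the final error. Everything else is a direct chaining of Lemmas~\ref{onelemma},~\ref{main}, and~\ref{1coreset}, plus the elementary inequality $(1+\eps)(1+3\eps)\leq(1+15\eps)(1-2\eps)$ for $\eps\in(0,1/4)$.
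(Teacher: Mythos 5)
Your proposal is correct and follows essentially the same route as the paper's proof: decompose $\cost(S,Q)$ into per-cluster single-center costs via Lemma~\ref{onelemma}, sandwich each via Lemma~\ref{coresetMin}, control $\sum_i\min_{q\in Q}\cost(P_i,\q)$ via Lemma~\ref{main} together with the identities $\sum_i\opt(P_i,1)=\opt(P,m)$ and $\sum_i\opt(P_i,k)\geq\opt(P,mk)$ and the stopping rule of Line~\ref{one}. In fact you make explicit a step the paper leaves terse (the identity $\cost(S,Q)=\sum_i\min_{q\in Q}\cost(S_i,\q)$), and your final numeric check $(1+\eps)(1+3\eps)\leq(1+15\eps)(1-2\eps)$ for $\eps<1/4$ is valid.
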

\begin{proof}
By replacing $P$ with $P_i$ in Lemma~\ref{onelemma} for each $i\in [m]$ it follows that
        \begin{equation*}
            \cost(P_i,Q)\leq \min_{q \in Q}\cost(P_i, Q)
                \leq \cost(P_i,Q)\cdot \frac{1+2\eps}{1-2\eps}+\frac{\opt(P_i,1)-\opt(P_i,k)}{(1-2\eps)\eps}.
        \end{equation*}
Summing the last inequality over each $P_i$ yields
\begin{equation}\label{ee}
\begin{split}
            \cost(P,Q)\leq \sum_{i=1}^m \min_{q \in Q}\cost(P_i, Q)                
   &\leq             \cost(P,Q)\cdot  \frac{1+2\eps}{1-2\eps}+\frac{1}{(1-2\eps)\eps}
   \sum_{i=1}^m \left(\opt(P_i,1)-\opt(P_i,k)\right).
    \end{split}
\end{equation}
Since $\br{P_1,\cdots,P_m}$ is the partition of the $m$-means of $P$ we have $\sum_{i=1}^m \opt(P_i,1)=\opt(P,m)$.
By letting $Q_i$ be the $m$-means of $P_i$ we have \[
\sum_{i=1}^m \opt(P_i,k)=\sum_{i=1}^m \cost(P_i,Q_i)\geq  \sum_{i=1}^m \cost(P_i,\cup_{j=1}^m Q_j)=\cost(P,\cup_{j=1}^m Q_j)\geq \opt(P,mk).
\]
Hence,
\[
\sum_{i=1}^m \left(\opt(P_i,1)-\opt(P_i,k)\right)
\leq \opt(P,m)-\opt(P,mk)\leq \eps^2\opt(P,k)\leq \eps^2\cost(P,Q),
\]
where the second inequality is by Line~\ref{one} of the algorithm.
Plugging the last inequality in~\eqref{ee} yields
\begin{equation}\label{eq3}
\begin{split}
            \cost(P,Q)\leq \sum_{i=1}^m \min_{q \in Q}\cost(P_i, Q)
   &\leq \cost(P,Q)\cdot  \frac{1+3\eps}{1-2\eps}.
    \end{split}
\end{equation}
Using Lemma~\ref{coresetMin}, for every $i\in[m]$
        \begin{equation*}
            (1-\eps) \min_{q \in Q}\cost(P_i, \q) \leq \min_{q \in Q}\cost(S_i, \q) \leq (1+\eps) \min_{q \in Q} \cost(P_i, \q)
        \end{equation*}

        By summing over $i \in [m]$ we obtain

\[
            (1-\eps) \sum_{i=1}^{m}\min_{q \in Q}\cost(P_i, \q) \leq \sum_{i=1}^{m} \min_{q \in Q}\cost(S_i, \q) \leq (1+\eps) \sum_{i=1}^{m} \min_{q \in Q}\cost(P_i, \q).  \label{sumS_i}
        \]
By this and Lemma~\ref{onelemma}
\[
            (1-\eps) \sum_{i=1}^{m}\min_{q \in Q}\cost(P_i, \q) \leq \cost(S,Q) 
            \leq (1+\eps) \sum_{i=1}^{m} \min_{q \in Q}\cost(P_i, \q).  \label{sumS_i}
\]
Plugging the last inequality in~\eqref{eq3} yields
\begin{equation}
\begin{split}
            (1-\eps)\cost(P,Q)&\leq (1-\eps)\sum_{i=1}^m \min_{q \in Q}\cost(P_i, Q)\\
            &\leq \cost(S,Q)\\
            &\leq (1+\eps) \sum_{i=1}^{m} \min_{q \in Q}\cost(P_i, \q)
            \leq (1+\eps)\cost(P,Q)\cdot \frac{1+3\eps}{1-2\eps}
            \leq (1+15\eps)\cost(P,Q).
    \end{split} 
\end{equation}
Hence, $S$ is a $(k,15\eps)$ coreset for $P$.
\end{proof}

\begin{lemma}\label{l2}
There is an integer $t< 1+1/\eps^2$ such that
\begin{equation}\label{toprove}
\OPT(P,k^t)-\OPT(P,k^{t+1})\leq \eps^2 \cdot \OPT(P,k).
\end{equation}
\end{lemma}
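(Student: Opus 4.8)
The plan is to exploit that $m \mapsto \OPT(P,m)$ is non-increasing in $m$, so that the consecutive gaps $d_t := \OPT(P,k^t) - \OPT(P,k^{t+1})$ are all non-negative, and then run a short averaging (pigeonhole) argument over the first $\ceil{1/\eps^2}$ of them. First I would record the monotonicity: any set of $k^t$ centers can be extended to a set of $k^{t+1}$ centers without increasing the cost, so $\OPT(P,k^{t+1}) \leq \OPT(P,k^t)$ and hence $d_t \geq 0$ for every integer $t \geq 0$.

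Next I would telescope the gaps, starting the sum at $t=1$. For any integer $T \geq 1$,
\[
\sum_{t=1}^{T} d_t = \OPT(P,k) - \OPT(P,k^{T+1}) \leq \OPT(P,k),
\]
where the inequality uses $\OPT(P,k^{T+1}) \geq 0$. The reason to start the sum at $t=1$ rather than $t=0$ is precisely to obtain a bound in terms of $\OPT(P,k)$ (the right-hand side of~\eqref{toprove}) instead of the potentially much larger $\OPT(P,1)$.

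Then comes the averaging step. I would set $T = \ceil{1/\eps^2}$ and argue by contradiction: if $d_t > \eps^2\,\OPT(P,k)$ held for every $t \in \{1,\dots,T\}$, then summing would give $\sum_{t=1}^{T} d_t > T\,\eps^2\,\OPT(P,k) \geq \OPT(P,k)$, contradicting the telescoping bound. Hence some $t \in \{1,\dots,T\}$ satisfies $d_t \leq \eps^2\,\OPT(P,k)$, which is exactly~\eqref{toprove}; and since $\ceil{1/\eps^2} < 1 + 1/\eps^2$, this $t$ obeys the required bound $t < 1 + 1/\eps^2$.

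The only delicate points---and the closest thing to an obstacle---are two degenerate checks rather than any genuine difficulty. First, the counting step implicitly normalizes by $\OPT(P,k)$, so I would dispose of the case $\OPT(P,k)=0$ separately: monotonicity then forces $\OPT(P,k^t)=0$ for all $t \geq 1$, so $d_1 = 0$ and $t=1$ works. Second, I would verify the strict inequality $\ceil{1/\eps^2} < 1 + 1/\eps^2$, which holds because $\ceil{x} < x+1$ for every real $x$. Everything else reduces to the one-line telescoping identity above plus the pigeonhole count.
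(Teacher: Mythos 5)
Your proposal is correct and follows essentially the same route as the paper: telescope the non-negative gaps $\OPT(P,k^i)-\OPT(P,k^{i+1})$ for $i=1,\dots,\lceil 1/\eps^2\rceil$ against the bound $\OPT(P,k)$ and conclude by pigeonhole/contradiction. Your extra checks (the $\OPT(P,k)=0$ case and $\lceil 1/\eps^2\rceil<1+1/\eps^2$) are harmless refinements of the same argument.
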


    \begin{proof}
        Contradictively assume that~\eqref{toprove} does not hold for every integer $i< 1+ 1/\eps^2$.
        Hence, 
        \[
            \begin{split}
                 \OPT(P,k)-\OPT(P,k^{\lceil 1/\eps^2\rceil+1})
                    =\sum_{i=1}^{\lceil 1/\eps^2\rceil}\big(\OPT(P,k^i) - \OPT(P,k^{i+1})\big)
                > \lceil 1/\eps^2\rceil\cdot \eps^2\OPT(P,k)\geq \OPT(P,k).
            \end{split}
        \]
        Contradiction, since $\OPT(P,k^{\lceil 1/\eps^2\rceil+1})\geq 0$.
    \end{proof}

Using the mean of $P_i$ in Line~\ref{four} of the algorithm yields a $(1,\eps)$-coreset $S_i$ as shown in Lemma~\ref{onelemma}.
The resulting coreset is not sparse, but gives the following result.
\begin{theorem}
There is $m\leq k^{1/\eps^2}$ such that the $m$-means of $P$ is a $(k,15\eps)$-coreset for $P$.
\end{theorem}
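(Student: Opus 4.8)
The plan is to read this theorem off as the special case of the algorithm $\algkmean(P,k,\eps)$ in which the inner $(1,\eps)$-coreset of Line~\ref{four} is taken to be the single mean of each cluster, and then to invoke the correctness and size bounds already established. First I would justify that the single mean is a valid choice: by Lemma~\ref{onelemma}, setting $S_i=(\{\mu(P_i)\}, w_i, \phi_i)$ with $w_i=\sum_{p\in P'_i}u(p)$ and $\phi_i=\cost(P_i,\mu(P_i))$ gives $\cost(S_i,x)=\cost(P_i,x)$ for every $x\in\REAL^d$, so $S_i$ is in fact an \emph{exact} (hence $(1,\eps)$-) coreset for $P_i$. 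In particular the bounds of Lemma~\ref{coresetMin} hold for it with room to spare, so this is a legitimate output of Line~\ref{four} and the hypotheses of Lemma~\ref{mainthm} are satisfied for this run.

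Next I would identify the weighted set returned by this run with the $m$-means of $P$. Since $\mu(S_i)=\mu(P_i)=q_i$, the set $S'=\bigcup_{i=1}^m \mu(S_i)$ is exactly the set $Q_m=\{q_1,\dots,q_m\}$ of $m$-means centers of $P$, carrying the cluster-mass weights $w(q_i)=\sum_{p\in P'_i}u(p)$ and the additive term $\phi=\sum_{i=1}^m\cost(P_i,\mu(P_i))$. In other words, the output of the algorithm under this instantiation is literally the $m$-means of $P$ together with its induced weights. Applying Lemma~\ref{mainthm} to this run then yields that this weighted set is a $(k,15\eps)$-coreset for $P$.

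Finally I would control the value of $m$. Line~\ref{one} selects the smallest integer $t\ge 0$ with $\OPT(P,k^t)-\OPT(P,k^{t+1})\le \eps^2\OPT(P,k)$, and Lemma~\ref{l2} guarantees such a $t$ exists with $t<1+1/\eps^2$; taking $1/\eps^2$ to be an integer (by rounding $\eps$ down) gives $t\le 1/\eps^2$ and hence $m=k^t\le k^{1/\eps^2}$. Combining this with the previous paragraph proves the claim.

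The bulk of the argument is thus already packaged in Lemmas~\ref{mainthm} and~\ref{l2}; the only genuinely new step is the identification of the mean-based run with the $m$-means. The point I expect to require the most care, and which must not be glossed over, is that ``the $m$-means of $P$ is a coreset'' refers to the \emph{weighted} set $(Q_m,w,\phi)$: the bare centers $Q_m$ alone, stripped of the cluster masses $w$ and of the additive variance term $\phi$, are not a coreset, since their cost would ignore the within-cluster variance entirely. Verifying that the $w$ and $\phi$ produced by Line~\ref{four} are exactly the ones that make each $S_i$ an exact $1$-coreset is the crux that licenses the appeal to Lemma~\ref{mainthm}.
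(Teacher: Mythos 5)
Your proposal is correct and follows essentially the same route as the paper, which justifies this theorem by instantiating Line~\ref{four} of $\algkmean$ with the exact one-point coreset $(\{\mu(P_i)\},w_i,\phi_i)$ and then invoking Lemma~\ref{mainthm} together with the bound $t<1+1/\eps^2$ from Lemma~\ref{l2}. Your explicit verification that the resulting output is literally the weighted $m$-means (centers, cluster masses, and the within-cluster variance as the additive term $\phi$) is exactly the point the paper leaves implicit, and your remark about rounding so that $t\le 1/\eps^2$ addresses a small technicality the paper glosses over.
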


\begin{proof}[\textbf{Proof of Theorem~\ref{thm1}}:]
We compute $S_i$ a $(1,\eps)$ mean coreset for $1$-mean of $P_i$ at line~\ref{four} of Algorithm~\ref{algk} by using Frank-Wolfe~\cite{onecenter} algorithm. It follows that $|S_i|=O(1/\eps^2)$ for each $i$, therefore the overall sparsity of $S$ is $s(P)/\eps^2$. This and Lemma~\ref{mainthm} concludes the proof.
\end{proof}

\section{Why does it work?}\label{secphen}
In this section we try to give an intuition of why our coreset construction yields a smaller error for the same number of samples, compared to existing coreset constructions. Roughly, this is mainly due to the ``cost of independent sampling" that is used by the existing smallest coreset constructions, namely the sensitivity/importance sampling approach~\cite{KeChen06, LS10,FL11, feldman2007ptas}.

In Fig.~\ref{fig:Ex1} the input is a set of 16 points on the plane that is distributed over $7$ clusters that are relatively far from each other. Each cluster consists of a single point, except for one cluster that has $16-6=10$ points.
Given a ``budget" (coreset size) of $m\geq 10$ points, the ``optimal coreset" seems to have all the $6$ isolated input points, including $m-6$ input points inside the large cluster, that are well distributed in this cluster. What would be the expected coresets of size $m$ using the existing techniques?
            \begin{figure}[htp]
                \centering
                \subfigure{\includegraphics[width=0.31\textwidth, frame]{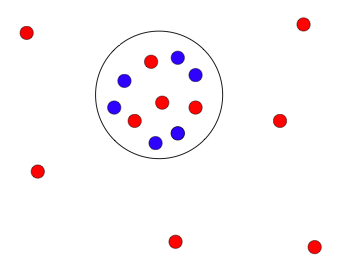}}
                \subfigure{\includegraphics[width=0.31\textwidth, frame]{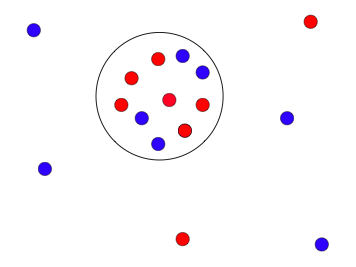}}
                \subfigure{\includegraphics[width=0.31\textwidth, frame]{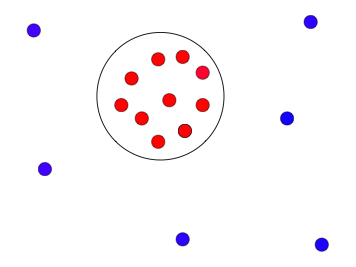}}
                \caption{\small \it A set of $16$ points that consists of $7$ clusters that are far from each other. Each of the first $6$ clusters contains a single point. The red points are the expected selected points for a coreset of $10$ points (with repetitions) using: \textbf{(left)} Algorithm 1, \textbf{(middle)} Non-uniform (importance/sensitivity) sampling, and \textbf{(right)} Uniform sampling.}
                \label{fig:Ex1}
            \end{figure}

\textbf{Algorithm~\algkmean.} would return exactly this ``optimal coreset", as the $m$-means of $P$ consists of the $6$ isolated clusters and the $m-6$ means of the large cluster. For the case $m=7$, the algorithm will pick exactly one representative in each cluster, as it is the $7$-means of $P$. See Fig.~\ref{fig:Ex1}(left).

\textbf{Uniform Sampling.} If the large cluster is sufficiently large, all the points in a uniform sample will be from this cluster, while all the other (singleton) $6$ clusters will be missed. Since these clusters are far away from each other, the approximation error will then be very large. Even for large sample size, uniform sample misses isolated clusters that are crucial for obtaining a small error. See Fig.~\ref{fig:Ex1}(right).

\textbf{Non-Uniform Sampling.} The optimal distribution that will make sure that a representative from each cluster will be selected to the coreset, is to sample a point from each of the $k=7$ clusters with roughly the same probability. However, due to the independent (i.i.d.) sampling approach, the number of samples that are needed in order to have a representative from each cluster is more than $k=7$. In general, the expected sample size is $O(k\log k)$. This phenomena is known as the coupon collector problem: if there is a coupon in each box at the supermarket, picked uniformly at random from a collection of $k$ distinct coupons, then one need to buy $O(k\log k)$ boxes in expectation to have the collection of all the $k$ coupons. This is compared to the deterministic construction of Algorithm~1 that always pick the desired $k$ representatives.

Even after having a representative from each of the isolated clusters a non-uniform sampling will keep sample a point from one of these clusters with probability $6/7$. This means that from the total ``budget" of $m$ points in the coreset, a large fraction will be used to sample the same point again and again. This is also why in Fig.~\ref{fig:Ex1}(middle) there is less number of red points than the other constructions.

\section{Practical and Simple Boosting of Existing Heuristics}\label{prac}
To get the desired phenomena that is described in Section~\ref{secphen} there is no need to actually compute the $m$-means for many values of $m$ and existing heuristics can be used. For example, any reasonable $k$-means heuristics for $k\geq 7$ would yield a set of $k$ red points as in Fig.~\ref{fig:Ex1}(left).

\textbf{The chicken and the egg phenomena.} As in Algorithm 1, coresets for solving optimization problems usually need to solve the optimization problem in order to decide which points are important to sample. This problem is solved in theory using rough approximation algorithms or bi-criteria approximations~\cite{edo, FL11} that replaces the optimal solution, or using the merge-and-reduce tree that apply the coreset constuction only on small sets.
In practice, algorithms that compute provable $(1+\eps)$-approximations or even $2$-approximations for the $k$-means clustering are rarely used. Instead, heuristics such as the classic LLoyd's $k$-means or $k$-Means++~\cite{ostrovsky2006effectiveness, arthur2007k} are used.

Based on our experimental results, a rough approximation using existing heuristics seems to be suffices.
In addition, plugging $\eps$ as an input, almost always produces coreset with error that is much smaller than $\eps$. This is common also in other coresets and related to the facts that the analysis is (i) for the worst case input set and not a specific $P$ that is usually well structured, (ii)  sub-optimal compared to the actual error, (iii) consider every set of $k$ centers, while we usually care about the optimal solution under some constraints.

We suggest to take the coreset size $|S|$ as the input and run $m=1,\cdots,|S|$ iterations of our algorithm. In fact, our experimental results suggest the following simple approach that use a single instead of $|S|$ runs and yields only slightly less better results.

\textbf{Boosting technique.} Given a heuristic for solving the $k$-means problem using some $N\gg k$ iterations, Algorithm 1 suggests to run the heuristic only small number of $m=O(k)\ll N$ iterations. Then, we take the mean of each of the $m$ clusters, weighted by the size of the cluster, or construct a $(1,\eps)$-coresets on each of the $m$ clusters. Then, we run the heuristic $N$ times on this ``coreset" of size $m$. Even if each iteration of the heuristics takes linear time of $O(nd)$, the running time is reduced from $O(Nnd)$ to $O(mnd)=O(knd)$.

\textbf{Example 1: LLyod'$k$-means.} In the case of Lloyd's $k$-means, each iteration takes $O(nd)$ time for computing the distances from the existing seed of $k$ centers, and then $O(nd)$ time is needed to compute the next set of centers. The boosting technique above suggests to run $m\sim O(k)\ll N$ such iterations on $P$ to produce a weighted coreset of size $m$. Then run the $N$ iterations on this coreset.

\textbf{Example 2: KMean++.} The KMean++ algorithm picks another point to the output set in each iteration, where the first point is a random seed. The next point is sampled with probability that is proportional to the distances of the input points to the center (points) that were already picked. This is very similar to the importance sampling that is used for constructing existing coresets with a crucial difference: the sampling is not independent and same for each new point, but \emph{adaptive}, i.e., based on points that were already picked. This is exactly the advantage of our approach compared to the non-uniform sampling, as described in Fig.~\ref{fig:Ex1} and Section~\ref{secphen}. Note that KMean++ will always select the right centers in Fig.~\ref{fig:Ex1}, no matter what is the seed and although it is a random algorithm.

The KMean++ algorithm is very natural for using with our boosting technique: We just run it for $m$ iterations to get a coreset of size $m$. Then we run KMean++ $N\gg m$ times on the coreset. In each of the $N$th times we use a different seed (first point) and take the optimal among the $N$ sets of $k$-mean candidates. Line~3 of Algoirthm~1 suggests an interesting way to choose the size $m$ of the coreset, based on our analysis in the supplementary material.

\section{Experimental Results}\label{sec:xp}
    \textbf{\bf Datasets.} To produce experimental results we have use two well known datasets.

    {\bf MNIST handwritten digits\cite{mnist}.} The MNIST dataset consist of $n=60,000$ grayscale images of handwritten digits. Each image of size 28x28 pixels was transformed to the the vector row of $d=784$ dimensions.

    {\bf Pendigits\cite{pendigits}.} This is a dataset from the UCI repository. The dataset created out of 250 samples provided by 44 writers. These writers were asked to write 250 digits in random order inside boxes of 500 by 500 tablet pixel resolution. The tablet sends $x$ and $y$ tablet coordinates and pressure level values of the pen at fixed time intervals (sampling rate) of 100 miliseconds. Digits are represented as constant length feature vectors of size $d=16$ the number of digits in the dataset is $n=10992$.

    {\bf NIPS dataset\cite{nipsOnline}.} The OCR data from the collection which represents 13 years of NIPS proceedings. The overall of 15,000 pages and 1958 articles. For each author there is a words count vector extracted, where ith entry in the vector represents count of the particular word which was used in one of the conference submissions by given author. There are overall $n=2865$ authors and words corpus size is $d=14036$.

    \textbf{Expirement.} We used our algorithm to boost the performance of Lloyd's $k$-means heuristic as explained in Section~\ref{prac}. Give a coreset size $m$ we run this heuristic for only $3$ iterations with $m$ centers.
    We compared our algorithm with uniform and importance sampling algorithms using both offline computation setting and streaming data model. For offline computation we used datasets above to produce coresets of sizes $ 100\leq m\leq 1500$, then computed $k$-means with values of $k=10, 15, 20, 25$ using Lloyd's heuristic for many iterations till convergence. While to simulate streaming data model we divided datasets into chunks and computed coresets of sizes $10 \leq m \leq 500$ using map-and-reduce techniques to construct a coreset tree, later repeated computation of $k$-means for same values of $k$.

    For each set of $k$ centers that was produced, we computed sum of squared distances to the original (full) set of points, and denoted these ``approximated solutions" by $C_{1}, C_{2}$ and $C_{3}$ for uniform, non uniform sampling and our algorithm respectively.
    The ``ground truth" or ``optimal solution" $C^k$ was computed using $k$-means on entire dataset until convergence. The empirical estimated error $\eps$ is then defined to be $\epsilon=C_{t}/C_{k} - 1$ for coreset number $t=1,2,3$.

    \textbf{Results for datasets}
        Fig.\ref{fig:offline} and Fig.\ref{fig:streaming} shows results for offline setting and streaming models respectevly. The results of out algorithm are outperforms the uniform sampling and non-uniform sampling algorithms. Important to note, that our algorithm starts with very small error value compared to others and improves error value gradually with sample size, while two others starts with greater error values and succeeds to converge to significantly smaller values only for large sample subsets.

    Fig.\ref{fig:boxplot_offline} and Fig.\ref{fig:boxplot_streaming}, shows the boxplot of error distribution for all three algorithms in offline and streaming settings. It's easy to see that that our algorithm show little variance across all experiments and mean error value is very close to the median, indicating that our algorithm produces very stable results, while running on streaming data whiskers are broader due to the multiplicative factor of $\log n$.

    In Fig.\ref{FigMem} we present the memory (RAM) footprint during the coreset construction based on synthetically generated random data.  These results are common to other coresets papers.  The oscillations corresponds to the number of coresets in the tree that each new chunk needs to update. For example, the first point in a streaming tree is updated in $O(1)$,  however the $2^i$th point for some $i \geq 1$ climbs up through $O(\log i)$ levels in the tree, so $O(\log i)$ coresets need to be merged.

            \begin{figure*}
                \subfigure{\includegraphics[width=0.3\textwidth]{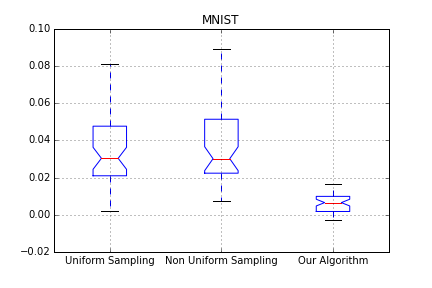}}
                \subfigure{\includegraphics[width=0.3\textwidth]{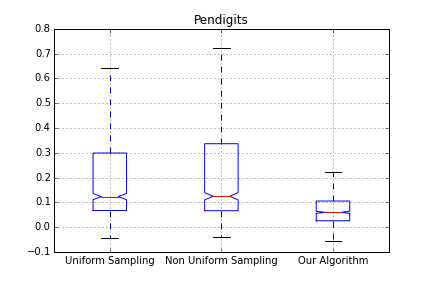}}
                \subfigure{\includegraphics[width=0.3\textwidth]{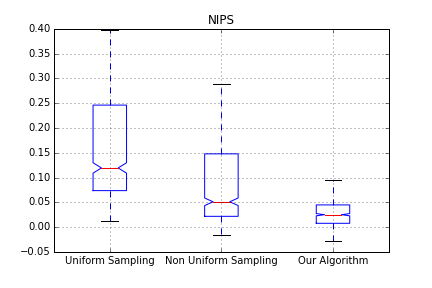}}
                \caption{\small \it Error (y-axis) box-plots for real-data sets, ofline computation model.}
                \label{fig:boxplot_offline}
            \end{figure*}

            \begin{figure*}
                \subfigure{\includegraphics[width=0.3\textwidth]{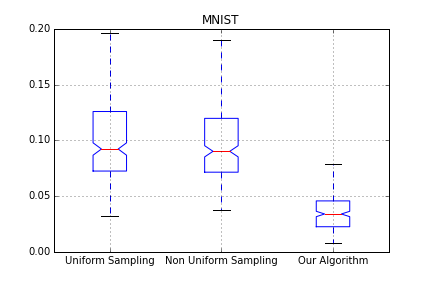}}
                \subfigure{\includegraphics[width=0.3\textwidth]{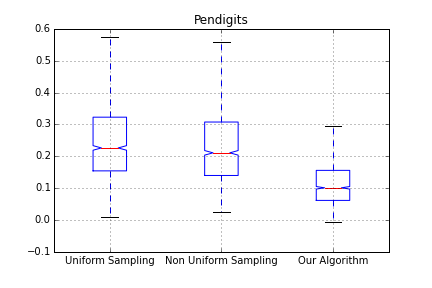}}
                \subfigure{\includegraphics[width=0.3\textwidth]{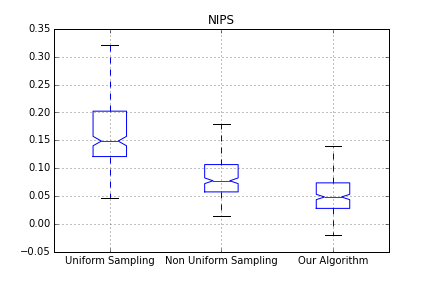}}
                \caption{\small \it Error (y-axis) box-plots for real-data sets, streaming computation model.}
               \label{fig:boxplot_streaming}
            \end{figure*}

            \begin{figure*}
                \subfigure[MNIST, k=10]{\includegraphics[width=0.3\textwidth]{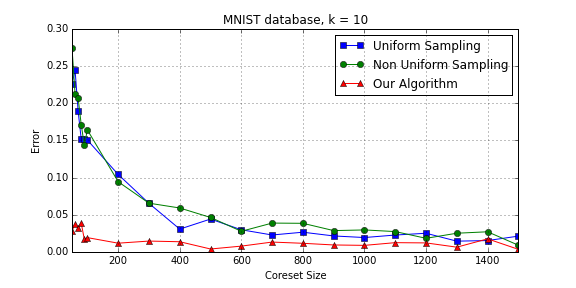}}
                \subfigure[Pendigits, k=10]{\includegraphics[width=0.3\textwidth]{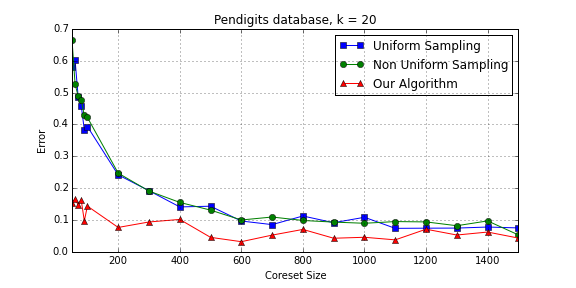}}
                \subfigure[NIPS, k=5]{\includegraphics[width=0.3\textwidth]{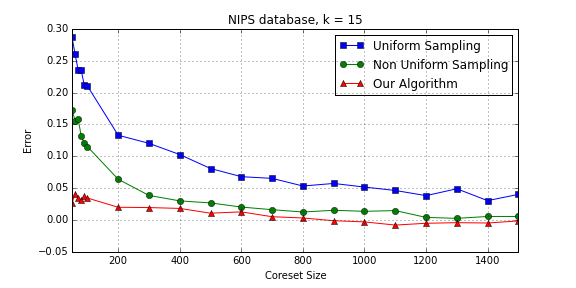}}

                \subfigure[MNIST, k=15]{\includegraphics[width=0.3\textwidth]{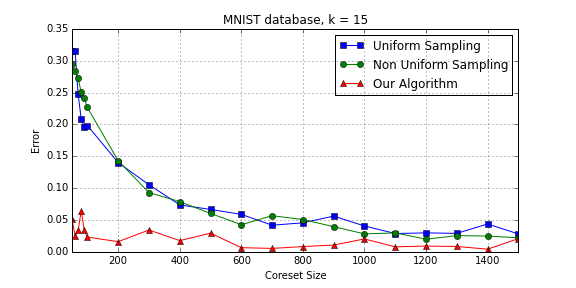}}
                \subfigure[Pendigits, k=15]{\includegraphics[width=0.3\textwidth]{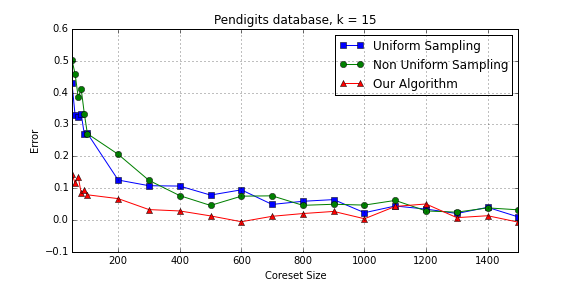}}
                \subfigure[NIPS, k=10]{\includegraphics[width=0.3\textwidth]{NIPS_10.png}}

                \subfigure[MNIST, k=20]{\includegraphics[width=0.3\textwidth]{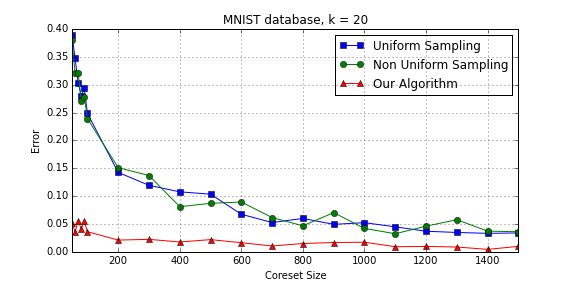}}
                \subfigure[Pendigits, k=20]{\includegraphics[width=0.3\textwidth]{Pendigits_20.png}}
                \subfigure[NIPS, k=15]{\includegraphics[width=0.3\textwidth]{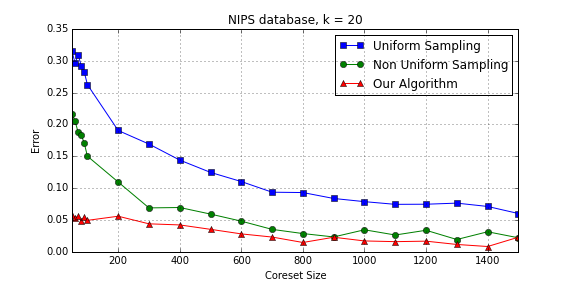}}

                \subfigure[MNIST, k=25]{\includegraphics[width=0.3\textwidth]{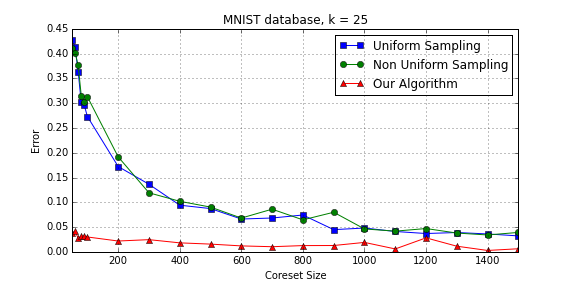}}
                \subfigure[Pendigits, k=25]{\includegraphics[width=0.3\textwidth]{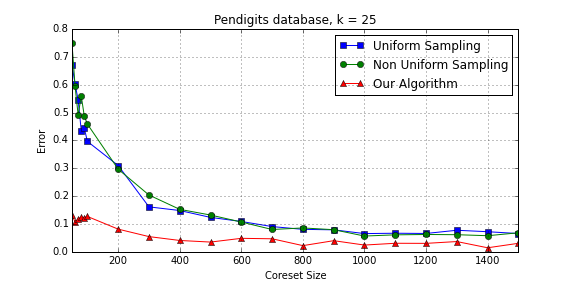}}
                \subfigure[NIPS, k=20]{\includegraphics[width=0.3\textwidth]{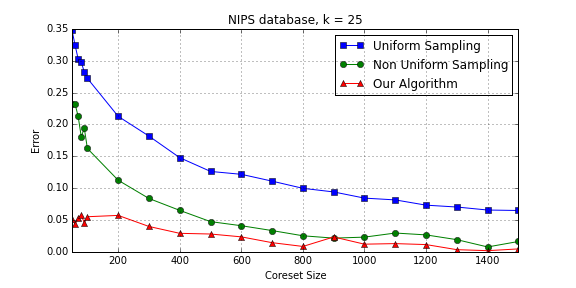}}

                \caption{\small \it Offline setup comparison of uniform sampling, non uniform sampling and our algorithms.}
                \label{fig:offline}
            \end{figure*}

            \begin{figure*}
                \subfigure[MNIST, k=10]{\includegraphics[width=0.3\textwidth]{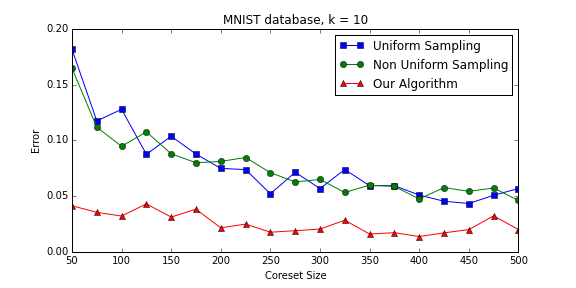}}
                \subfigure[Pendigits, k=10]{\includegraphics[width=0.3\textwidth]{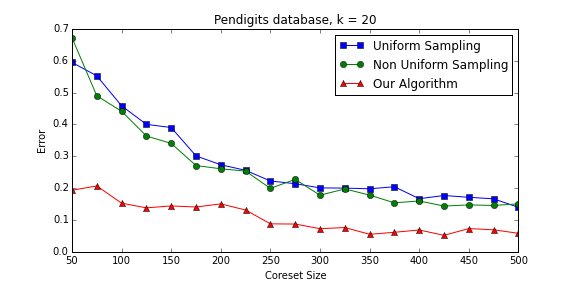}}
                \subfigure[NIPS, k=5]{\includegraphics[width=0.3\textwidth]{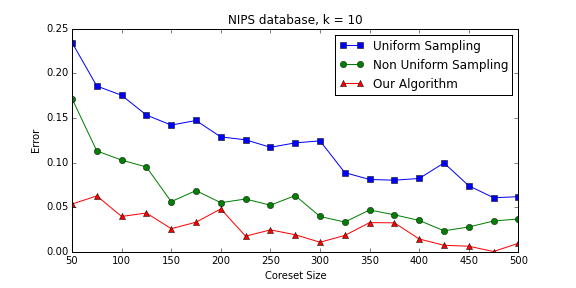}}

                \subfigure[MNIST, k=15]{\includegraphics[width=0.3\textwidth]{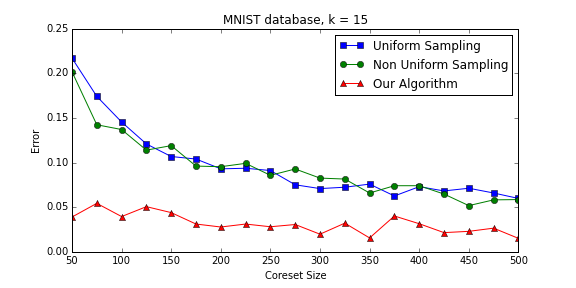}}
                \subfigure[Pendigits, k=15]{\includegraphics[width=0.3\textwidth]{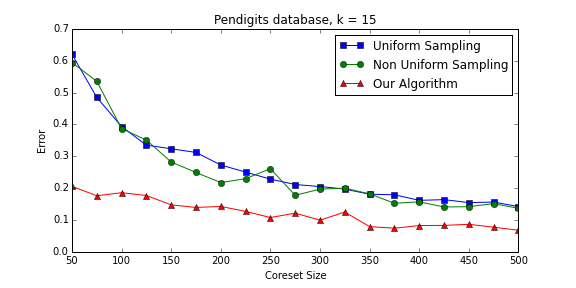}}
                \subfigure[NIPS, k=10]{\includegraphics[width=0.3\textwidth]{NIPS_10_streaming.png}}

                \subfigure[MNIST, k=20]{\includegraphics[width=0.3\textwidth]{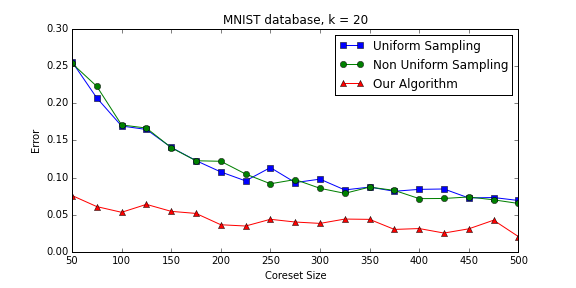}}
                \subfigure[Pendigits, k=20]{\includegraphics[width=0.3\textwidth]{Pendigits_20_streaming.png}}
                \subfigure[NIPS, k=15]{\includegraphics[width=0.3\textwidth]{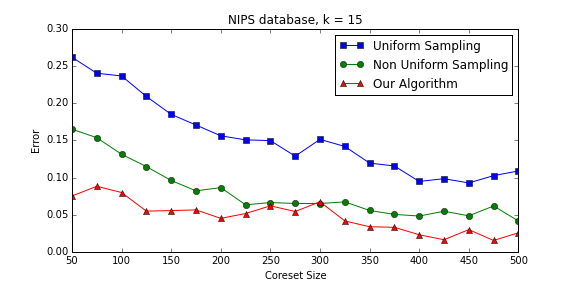}}

                \subfigure[MNIST, k=25]{\includegraphics[width=0.3\textwidth]{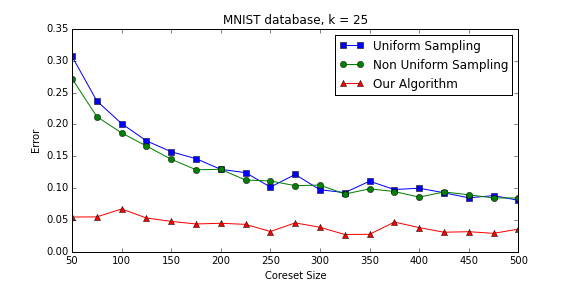}}
                \subfigure[Pendigits, k=25]{\includegraphics[width=0.3\textwidth]{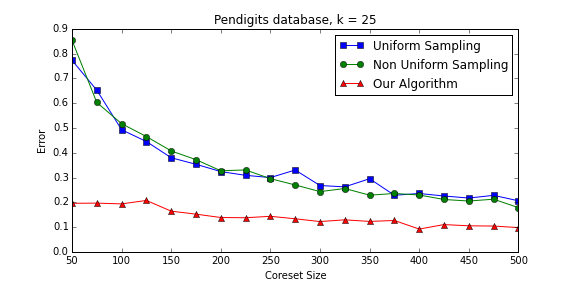}}
                \subfigure[NIPS, k=20]{\includegraphics[width=0.3\textwidth]{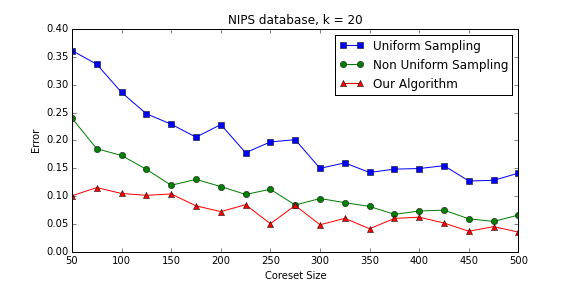}}
                \caption{\small \it Streaming setup comparison of uniform sampling, non uniform sampling and our algorithms.}
                \label{fig:streaming}
            \end{figure*}

			\begin{figure}
                \begin{center}
                \includegraphics[height=3cm, width=9cm]{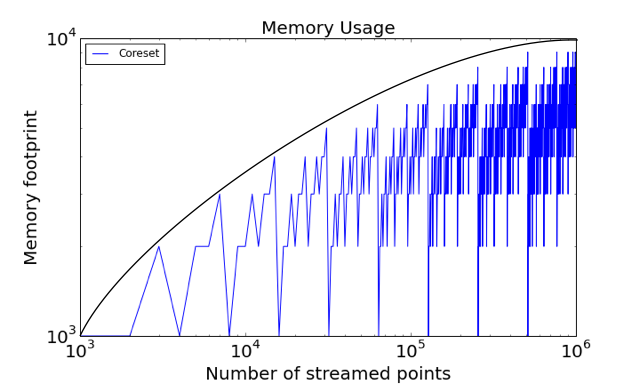}
			   \caption{\small \it  Allocated memory (y-axis) grows logarithmically during streaming coreset construction. The Zig-zag patterns caused by the binary merge-reduce tree in Fig.~\ref{fig:tree}.}
			   \label{FigMem}
                \end{center}
			\end{figure}

\newpage
\bibliographystyle{abbrv}
\bibliography{mybib}
\end{document}